\documentclass[11pt,amssymb,amsfont,a4paper]{article}
\usepackage{latexsym,amssymb,amsmath,amsthm,color}
\usepackage{geometry}
\usepackage{url}
\usepackage{graphicx}
\usepackage[utf8]{inputenc}
\usepackage{authblk}
\usepackage{multirow}

\theoremstyle{definition}
\newtheorem{definition}{Definition}
\newtheorem{construction}{Construction}

\theoremstyle{plain}
\newtheorem{theorem}{Theorem}
\newtheorem{proposition}[definition]{Proposition}
\newtheorem{lemma}[definition]{Lemma}

\newtheorem{corollary}[definition]{Corollary}

\title{A general family of MSRD codes and PMDS codes with smaller field sizes from extended Moore matrices}
\author{Umberto Mart{\'i}nez-Pe\~{n}as \thanks{umberto.martinez@uva.es}}
\affil{IMUVa-Mathematics Research Institute,\\University of Valladolid, Spain}

\date{}

\begin{document}

\maketitle

\begin{abstract}
We construct six new explicit families of linear maximum sum-rank distance (MSRD) codes, each of which has the smallest field sizes among all known MSRD codes for some parameter regime. Using them and a previous result of the author, we provide two new explicit families of linear partial MDS (PMDS) codes with smaller field sizes than previous PMDS codes for some parameter regimes. Our approach is to characterize evaluation points that turn extended Moore matrices into the parity-check matrix of a linear MSRD code. We then produce such sequences from codes with good Hamming-metric parameters. The six new families of linear MSRD codes with smaller field sizes are obtained using MDS codes, Hamming codes, BCH codes and three Algebraic-Geometry codes. The MSRD codes based on Hamming codes, of minimum sum-rank distance $ 3 $, meet a recent bound by Byrne et al. 

\textbf{Keywords:} Linearized Reed-Solomon codes, locally repairable codes, Moore matrices, MDS codes, MRD codes, MSRD codes, PMDS codes, sum-rank metric.

\textbf{MSC:} 15B33; 11T71; 94B27; 94B65
\end{abstract}

\section{Introduction} \label{sec intro}

\textit{Maximum distance separable} (MDS) codes are optimal in the sense that their minimum Hamming distance \cite{hamming} attains the Singleton bound \cite{singleton}, which is independent of the alphabet size (i.e., field size). 
Thus, in erasure scenarios where alphabets need not be too small and for fixed block lengths, MDS codes offer the best erasure correction capability. One of such erasure scenarios is node repair in distributed storage. 
However, repairing a single node out of $ n $ nodes using an MDS code of rate $ k/n $ requires contacting $ k $ other nodes. Thus repairing a single node results in a high latency when using MDS codes. \textit{Locally repairable codes} (LRCs) \cite{gopalan, kamath} may repair one node (or more generally, $ \delta -1 $ nodes) by contacting a small number $ r $ (called \textit{locality}) of other nodes. Simultaneously, they can correct many global erasures in catastrophic cases.
%

\textit{Partial MDS} (PMDS) codes \cite{blaum-RAID, gopalan-MR} are LRCs that can correct all the erasure patterns correctable by any other LRC over any alphabet but with the same information rate and locality constraints, if we assume that local repair sets are pair-wise non-intersecting (note that general LRCs do not require local repair sets to be non-intersecting). Singleton bounds for the global minimum distance of LRCs were given in \cite[Eq. (2)]{gopalan} and \cite[Th. 2.1]{kamath}. Any PMDS code attains such Singleton bounds, but not all LRCs attaining such bounds are PMDS. 

Several constructions of PMDS codes exist in the literature \cite{blaum-RAID, blaum-twoparities, neri-PMDS, cai-field, calis, gabrys, gopalan-MR, guruswami-PMDS, hu, universal-lrc, neri}. In Construction 1 in \cite{universal-lrc}, it was shown that any \textit{maximum sum-rank distance} (MSRD) code \cite{linearizedRS} may be explicitly turned into a PMDS code \cite[Th. 2]{universal-lrc}. Moreover, this Construction 1 enjoys further flexibility properties, such as enabling hierarchical PMDS codes (see \cite{universal-lrc}). As another application of such a flexibility, optimal LRCs with multiple disjoint repair sets were obtained based on MSRD codes in \cite{cai-disjoint}.

Apart from being used as PMDS codes for distributed storage \cite{universal-lrc}, MSRD codes have applications in reliable and secure multishot network coding \cite{multishot, secure-multishot}, rate-diversity optimal space-time codes with multiple fading blocks \cite{space-time-kumar, mohannad}, multilayer crisscross error correction \cite{multilayer}, and private information retrieval from locally repairable databases \cite{pirlrc}. 

Codes over small fields are preferable, as they enjoy lower computational complexity for encoding and decoding. In contrast with MDS codes, PMDS codes and MSRD codes with linear field sizes in the code length do not exist for arbitrary dimensions \cite{gopi, alberto-fundamental}. 
%
The problems of finding the smallest possible field sizes of PMDS codes and MSRD codes constitute two generalizations of the MDS conjecture. However, they are significantly harder since even the possible asymptotic field sizes are unknown for PMDS and MSRD codes, whereas it was known since \cite{reed-solomon, singleton} that MDS codes exist if, and only if, the field sizes grow at least linearly in the code length.

In this work, we obtain a general family of MSRD codes that extends linearized Reed-Solomon codes \cite{linearizedRS}, but also includes six new families of explicit MSRD codes (rows 2 to 7 in Table \ref{table comparisons MSRD}), each of which attains smaller field sizes than all other known MSRD codes for some parameter regime. For minimum sum-rank distance $ 3 $ (co-dimension $ 2 $), our MSRD codes meet a bound recently given in \cite[Th. 6.12]{alberto-fundamental}. We also obtain two new families of explicit PMDS codes (rows 2 and 3 in Table \ref{table comparisons PMDS}), each of which attains smaller field sizes than all other known PMDS codes for some parameter regime. See Section \ref{sec summary and final considerations} for a detailed summary and comparisons, and the Appendix for tables with concrete values of even field sizes and other parameters. 

The manuscript is organized as follows. In Section \ref{sec preliminaries}, we collect preliminaries on MDS, MSRD and PMDS codes. In Section \ref{sec conjugacy and ext moore matricex}, we characterize sequences of evaluation points that turn an extended Moore matrix into the parity-check matrix of an MSRD code. In Section \ref{sec technique of tensor products}, we construct such sequences via tensor products and a range of Hamming-metric codes. In Section \ref{sec summary and final considerations}, we provide a summary of the obtained MSRD and PMDS codes and compare their parameters among themselves and with known codes. 

\section{Preliminaries} \label{sec preliminaries}

We will denote $ \mathbb{N} = \{ 0,1,2, \ldots \} $ and $ \mathbb{N}_+ = \{ 1,2,3, \ldots \} $. For positive integers $ m \leq n $, we denote $ [n] = \{ 1,2, \ldots, n\} $ and $ [m,n] = \{ m, m+1, \ldots, n \} $. For a field $ \mathbb{F} $, we denote $ \mathbb{F}^* = \mathbb{F} \setminus \{ 0 \} $ and we use $ \langle \cdot \rangle_\mathbb{F} $ and $ \dim_\mathbb{F}(\cdot) $ to denote $ \mathbb{F} $-linear span and dimension over $ \mathbb{F} $, respectively. We denote by $ \mathbb{F}^{m \times n} $ the set of $ m \times n $ matrices with entries in $ \mathbb{F} $, and we denote $ \mathbb{F}^n = \mathbb{F}^{1 \times n} $. The group of invertible matrices in $ \mathbb{F}^{n \times n} $ is denoted by $ {\rm GL}_n(\mathbb{F}) $. A code in $ \mathbb{F}^n $ is any subset $ \mathcal{C} \subseteq \mathbb{F}^n $, and we say that $ \mathcal{C} $ is a linear code if it is an $ \mathbb{F} $-linear vector subspace of $ \mathbb{F}^n $. For matrices $ A_1, A_2, \ldots, $ $ A_g $ $ \in \mathbb{F}^{r \times s} $, for some positive integers $ g $, $ r $ and $ s $, we define the block-diagonal matrix
$$ {\rm diag} (A_1, A_2, \ldots, A_g) = \left( \begin{array}{cccc}
A_1 & 0 & \ldots & 0 \\
0 & A_2 & \ldots & 0 \\
\vdots & \vdots & \ddots & \vdots \\
0 & 0 & \ldots & A_g
\end{array} \right) \in \mathbb{F}^{gr \times gs} . $$
We will also denote by $ \mathbf{c} \cdot \mathbf{d} \in \mathbb{F} $ the conventional inner product of $ \mathbf{c}, \mathbf{d} \in \mathbb{F}^n $ (i.e., $ \mathbf{c} \cdot \mathbf{d} = \mathbf{c} \mathbf{d}^T $), and we denote the dual of a linear code $ \mathcal{C} \subseteq \mathbb{F}^n $ by $ \mathcal{C}^\perp = \{ \mathbf{d} \in \mathbb{F}^n \mid \mathbf{c} \cdot \mathbf{d} = 0, \textrm{ for all } \mathbf{c} \in \mathcal{C} \} \subseteq \mathbb{F}^n $.

For a prime power $ q $, we denote by $ \mathbb{F}_q $ the finite field with $ q $ elements. Throughout this manuscript, we will fix a prime power $ q $ and a finite-field extension $ \mathbb{F}_q \subseteq \mathbb{F}_{q^m} $, for some positive integer $ m $. The field $ \mathbb{F}_q $ will be called \textit{the base field} throughout the manuscript. Our target codes will be linear codes $ \mathcal{C} \subseteq \mathbb{F}_{q^m}^n $, hence we will usually call $ \mathbb{F}_{q^m} $ \textit{the field of linearity} of $ \mathcal{C} $.
%

\subsection{MDS codes} \label{subsec mds codes}

For a positive integer $ n $ and a field $ \mathbb{F} $, the \textit{Hamming weight} \cite{hamming} of a vector $ \mathbf{c} = (c_1, c_2, \ldots, $ $ c_n) $ $ \in \mathbb{F}^n $ is $ {\rm wt}_H(\mathbf{c}) = | \{ i \in [n] \mid c_i \neq 0 \} | $. We define the \textit{Hamming metric} $ {\rm d}_H : \left( \mathbb{F}^n \right) ^2 \longrightarrow \mathbb{N} $ by $ {\rm d}_H(\mathbf{c}, \mathbf{d}) = {\rm wt}_H(\mathbf{c} - \mathbf{d}) $, for all $ \mathbf{c}, \mathbf{d} \in \mathbb{F}^n $. For a (linear or non-linear) code $ \mathcal{C} \subseteq \mathbb{F}^n $, its \textit{minimum Hamming distance} is $ {\rm d}_H(\mathcal{C}) = \min \left\lbrace {\rm d}_H(\mathbf{c}, \mathbf{d}) \mid \mathbf{c}, \mathbf{d} \in \mathcal{C}, \mathbf{c} \neq \mathbf{d} \right\rbrace $. 

We next revisit the \textit{Singleton bound} and \textit{MDS codes} \cite{singleton}.

\begin{proposition} [\textbf{\cite{singleton}}] \label{prop singleton bound}
For any (linear or non-linear) code $ \mathcal{C} \subseteq \mathbb{F}^n $, it holds that $ | \mathcal{C} | \leq |\mathbb{F}|^{n - {\rm d}_H(\mathcal{C}) + 1} $. If equality holds, then we say that $ \mathcal{C} $ is a maximum distance separable (MDS) code.
\end{proposition}

\subsection{MSRD codes} \label{subsec msrd codes}

Fix positive integers $ m $ and $ r $, and an ordered basis $ \boldsymbol\alpha = ( \alpha_1, \alpha_2, \ldots, \alpha_m ) \in \mathbb{F}_{q^m}^m $ of $ \mathbb{F}_{q^m} $ over $ \mathbb{F}_q $. We define the \textit{matrix representation} map $ M_{\boldsymbol\alpha} : \mathbb{F}_{q^m}^r \longrightarrow \mathbb{F}_q^{m \times r} $ by 
\begin{equation}
M_{\boldsymbol\alpha} \left( \sum_{i=1}^m \alpha_i \mathbf{c}_i \right) = \left( \begin{array}{cccc}
c_{1,1} & c_{1,2} & \ldots & c_{1,r} \\
c_{2,1} & c_{2,2} & \ldots & c_{2,r} \\
\vdots & \vdots & \ddots & \vdots \\
c_{m,1} & c_{m,2} & \ldots & c_{m,r} \\
\end{array} \right) \in \mathbb{F}_q^{m \times r},
\label{eq def matrix representation map}
\end{equation}
where $ \mathbf{c}_i = (c_{i,1}, c_{i,2}, \ldots, c_{i,r}) \in \mathbb{F}_q^r $, for $ i = 1,2, \ldots, m $. In order to define sum-rank weights on vectors with components in $ \mathbb{F}_{q^m} $, we will subdivide them into subvectors as $ \mathbf{c} = (\mathbf{c}^{(1)}, $ $ \mathbf{c}^{(2)}, $ $ \ldots, $ $ \mathbf{c}^{(g)}) \in \mathbb{F}_{q^m}^{g r} $, where $ \mathbf{c}^{(i)} \in \mathbb{F}_{q^m}^r $, for $ i = 1,2, \ldots, g $, for a positive integer $ g $. Using (\ref{eq def matrix representation map}), we may consider $ \mathbf{c} \in \mathbb{F}_{q^m}^{gr} $ as a list of $ g $ matrices of size $ m \times r $ over $ \mathbb{F}_q $:
\begin{equation}
 \left( M_{\boldsymbol\alpha} \left( \mathbf{c}^{(1)} \right) , M_{\boldsymbol\alpha} \left( \mathbf{c}^{(2)} \right), \ldots, M_{\boldsymbol\alpha} \left( \mathbf{c}^{(g)} \right) \right) \in \left( \mathbb{F}_q^{m \times r} \right)^g .
\label{eq codeword as list of matrices}
\end{equation}

We now define the \textit{sum-rank metric}, which was explicitly defined in \cite[Sec. III-D]{multishot}, but previously used implicitly in \cite[Sec. III]{space-time-kumar}.

\begin{definition} [\textbf{Sum-rank metric \cite{space-time-kumar, multishot}}]
Let $ g $ be a positive integer, and let $ \mathbf{c} = (\mathbf{c}^{(1)}, $ $ \mathbf{c}^{(2)}, $ $ \ldots, $ $ \mathbf{c}^{(g)}) \in \mathbb{F}_{q^m}^{g r} $, where $ \mathbf{c}^{(i)} \in \mathbb{F}_{q^m}^r $, for $ i = 1,2, \ldots, g $. We define the \textit{sum-rank weight} of $ \mathbf{c} $, for the partition $ (g,r) $ over the base field $ \mathbb{F}_q $, by
$$ {\rm wt}_{SR}(\mathbf{c}) = \sum_{i=1}^g {\rm Rk} \left( M_{\boldsymbol\alpha} ( \mathbf{c}^{(i)} ) \right) . $$
We define the \textit{sum-rank metric} $ {\rm d}_{SR} : \left( \mathbb{F}_{q^m}^{g r} \right) ^2 \longrightarrow \mathbb{N} $, for the partition $ (g,r) $ over the base field $ \mathbb{F}_q $, by $ {\rm d}_{SR}(\mathbf{c}, \mathbf{d}) = {\rm wt}_{SR}(\mathbf{c} - \mathbf{d}) $, for all $ \mathbf{c}, \mathbf{d} \in \mathbb{F}_{q^m}^{g r} $. For a code $ \mathcal{C} \subseteq \mathbb{F}_{q^m}^{gr} $ (linear or non-linear), we define its \textit{minimum sum-rank distance} by $ {\rm d}_{SR}(\mathcal{C}) = \min \{ {\rm d}_{SR}(\mathbf{c}, \mathbf{d}) \mid \mathbf{c}, \mathbf{d} \in \mathcal{C}, \mathbf{c} \neq \mathbf{d} \} $. The number $ g $ will be called \textit{the number of matrix sets}. If the context is clear, we will not specify the partition $ (g,r) $ nor the base field $ \mathbb{F}_q $. 
\end{definition}
%

Observe that the Hamming metric \cite{hamming} and the rank metric \cite{delsartebilinear, gabidulin, roth} are recovered from the sum-rank metric by setting $ r = 1 $ and $ g = 1 $, respectively. 

We have the following extension of the Singleton bound from the Hamming metric (Proposition \ref{prop singleton bound}) to the sum-rank metric, given in \cite[Cor. 2]{universal-lrc}.

\begin{proposition} [\textbf{Singleton bound \textbf{\cite{universal-lrc}}}] \label{prop sum rank singleton}
Let $ \mathcal{C} \subseteq \mathbb{F}_{q^m}^{gr} $ be a (linear or non-linear) code. For the partition $ (g,r) $ and the base field $ \mathbb{F}_q $, we have
\begin{equation}
| \mathcal{C} | \leq q^{m (gr - {\rm d}_{SR}(\mathcal{C}) + 1)}.
\label{eq sum-rank singleton bound}
\end{equation}
Furthermore, equality holds in (\ref{eq sum-rank singleton bound}) if, and only if, $ \mathcal{C} \cdot {\rm diag}(A_1, A_2, \ldots, A_g) \subseteq \mathbb{F}_{q^m}^{gr} $ is MDS, for all $ A_1, A_2, \ldots, A_g \in {\rm GL}_{r}(\mathbb{F}_q) $.
\end{proposition}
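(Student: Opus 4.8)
The plan is to deduce both parts from the Hamming-metric Singleton bound (Proposition \ref{prop singleton bound}) through two elementary observations: the sum-rank weight never exceeds the Hamming weight, and right multiplication by a block-diagonal matrix with blocks in $ {\rm GL}_r(\mathbb{F}_q) $ is a sum-rank isometry. First I would show $ {\rm wt}_{SR}(\mathbf{c}) \leq {\rm wt}_H(\mathbf{c}) $ for every $ \mathbf{c} \in \mathbb{F}_{q^m}^{gr} $: in each block, $ {\rm Rk}(M_{\boldsymbol\alpha}(\mathbf{c}^{(i)})) $ is at most the number of nonzero columns of $ M_{\boldsymbol\alpha}(\mathbf{c}^{(i)}) $, which equals the number of nonzero $ \mathbb{F}_{q^m} $-entries of $ \mathbf{c}^{(i)} $; summing over $ i $ gives $ {\rm d}_{SR}(\mathcal{C}) \leq {\rm d}_H(\mathcal{C}) $. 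Plugging $ \mathbb{F} = \mathbb{F}_{q^m} $ and $ n = gr $ into (\ref{eq singleton bound}) then yields $ |\mathcal{C}| \leq (q^m)^{gr - {\rm d}_H(\mathcal{C}) + 1} \leq (q^m)^{gr - {\rm d}_{SR}(\mathcal{C}) + 1} $, which is exactly (\ref{eq sum-rank singleton bound}).

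Next I would establish the isometry by checking $ M_{\boldsymbol\alpha}(\mathbf{v} A) = M_{\boldsymbol\alpha}(\mathbf{v}) A $ for all $ \mathbf{v} \in \mathbb{F}_{q^m}^r $ and $ A \in \mathbb{F}_q^{r \times r} $ (expand $ \mathbf{v} $ in the basis $ \boldsymbol\alpha $ and use that the entries of $ A $ lie in $ \mathbb{F}_q $, so that right multiplication acts rowwise). When $ A \in {\rm GL}_r(\mathbb{F}_q) $ this preserves rank in each block, so writing $ \mathcal{C}_A = \mathcal{C} \cdot {\rm diag}(A_1, \ldots, A_g) $ we get $ |\mathcal{C}_A| = |\mathcal{C}| $ and $ {\rm d}_{SR}(\mathcal{C}_A) = {\rm d}_{SR}(\mathcal{C}) $; in particular $ \mathcal{C}_A $ is MSRD if and only if $ \mathcal{C} $ is. Throughout I would keep the argument cardinality-based, using $ \log_{q^m}|\mathcal{C}| $ rather than a dimension, so that it applies to non-linear codes as well.

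For the equivalence I would exploit the chain $ {\rm d}_{SR}(\mathcal{C}_A) \leq {\rm d}_H(\mathcal{C}_A) \leq gr + 1 - \log_{q^m}|\mathcal{C}| $, where the right inequality is (\ref{eq singleton bound}) applied to $ \mathcal{C}_A $. For the forward direction, if $ \mathcal{C} $ is MSRD then each $ \mathcal{C}_A $ is MSRD, so its left end already equals the right end, forcing $ {\rm d}_H(\mathcal{C}_A) = gr + 1 - \log_{q^m}|\mathcal{C}| $, i.e.\ $ \mathcal{C}_A $ is MDS. For the converse, I would take $ \mathbf{c} \neq \mathbf{d} $ in $ \mathcal{C} $ with $ {\rm wt}_{SR}(\mathbf{c} - \mathbf{d}) = {\rm d}_{SR}(\mathcal{C}) $, set $ \mathbf{e} = \mathbf{c} - \mathbf{d} $ and $ \rho_i = {\rm Rk}(M_{\boldsymbol\alpha}(\mathbf{e}^{(i)})) $, and choose $ A_i \in {\rm GL}_r(\mathbb{F}_q) $ performing invertible column operations that leave $ M_{\boldsymbol\alpha}(\mathbf{e}^{(i)}) A_i $ with exactly $ \rho_i $ nonzero columns; then $ {\rm wt}_H(\mathbf{e} \cdot {\rm diag}(A_1, \ldots, A_g)) = \sum_i \rho_i = {\rm d}_{SR}(\mathcal{C}) $, so $ {\rm d}_H(\mathcal{C}_A) \leq {\rm d}_{SR}(\mathcal{C}) $. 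Combined with $ \mathcal{C}_A $ being MDS and the sum-rank bound already proved, this squeezes $ {\rm d}_{SR}(\mathcal{C}) = gr + 1 - \log_{q^m}|\mathcal{C}| $, so $ \mathcal{C} $ is MSRD.

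The main obstacle is precisely this last column-reduction step: I must ensure that for the chosen worst-case difference $ \mathbf{e} $ there is a single block-diagonal $ \mathbb{F}_q $-matrix realizing each blockwise rank as an exact count of nonzero columns, thereby pinning $ {\rm d}_H(\mathcal{C}_A) $ down to $ {\rm d}_{SR}(\mathcal{C}) $. This is what transfers MDS-ness back to MSRD-ness; everything else is bookkeeping with the two Singleton bounds and the isometry property.
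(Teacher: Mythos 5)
Your argument is correct: the key identity underlying it, namely that $ {\rm d}_{SR}(\mathcal{C}) $ equals the minimum of $ {\rm d}_H\bigl(\mathcal{C} \cdot {\rm diag}(A_1, \ldots, A_g)\bigr) $ over all choices of $ A_i \in {\rm GL}_r(\mathbb{F}_q) $, established via the rank-versus-nonzero-columns comparison in one direction and blockwise column reduction in the other, is exactly what makes both the bound and the equality characterization follow from the Hamming-metric Singleton bound. The paper does not prove this proposition itself but imports it from \cite[Cor.~2]{universal-lrc}, where the argument proceeds along essentially the same lines, so your proposal matches the intended proof.
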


The main objects of study in this manuscript are \textit{maximum sum-rank distance (MSRD) codes}, introduced in \cite[Th. 4]{linearizedRS}, which extend MDS codes.

\begin{definition} [\textbf{MSRD codes \cite{linearizedRS}}] \label{def msrd codes}
For a positive integer $ g $, we say that a (linear or non-linear) code $ \mathcal{C} \subseteq \mathbb{F}_{q^m}^{gr} $ is maximum sum-rank distance (MSRD), for the partition $ (g,r) $ and the base field $ \mathbb{F}_q $, if equality holds in (\ref{eq sum-rank singleton bound}).
\end{definition}
%
%

By \cite[Cor. 3]{universal-lrc}, $ m \geq r $ is required by any MSRD code of minimum sum-rank distance larger than $ 1 $. Thus we will assume $ m \geq r $ from now on.

The following result was proven in \cite[Th. 5]{gsrws}.

\begin{lemma} [\textbf{\cite{gsrws}}] \label{lemma dual of msrd is msrd}
A linear code $ \mathcal{C} \subseteq \mathbb{F}_{q^m}^{gr} $ is MSRD if, and only if, its dual $ \mathcal{C}^\perp \subseteq \mathbb{F}_{q^m}^{gr} $ is MSRD, in both cases for the partition $ (g,r) $ and base field $ \mathbb{F}_q $.
\end{lemma}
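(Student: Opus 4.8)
The plan is to reduce the statement to two ingredients: that the (Euclidean) dual of an MDS code is again MDS, and that right-multiplication by an invertible matrix interacts with duality in a controlled way. By Definition \ref{def msrd codes}, $ \mathcal{C} $ is MSRD if and only if $ \mathcal{C} \cdot {\rm diag}(A_1, A_2, \ldots, A_g) $ is MDS for every choice of $ A_1, A_2, \ldots, A_g \in {\rm GL}_r(\mathbb{F}_q) $. I would therefore aim to rewrite the dual of $ \mathcal{C} \cdot {\rm diag}(A_1, A_2, \ldots, A_g) $ as $ \mathcal{C}^\perp \cdot {\rm diag}(B_1, B_2, \ldots, B_g) $ for a suitable family of invertible matrices $ B_i $, and then transfer the MDS property across the dual.

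The key linear-algebra computation is as follows: for any $ P \in {\rm GL}_{gr}(\mathbb{F}_{q^m}) $ and any linear code $ \mathcal{D} \subseteq \mathbb{F}_{q^m}^{gr} $, one has $ (\mathcal{D} \cdot P)^\perp = \mathcal{D}^\perp \cdot (P^{-1})^T $. This follows directly from the definition of the dual together with the row-vector convention $ \mathbf{c} \cdot \mathbf{d} = \mathbf{c} \mathbf{d}^T $: a vector $ \mathbf{d} $ is orthogonal to all $ \mathbf{c} P $ precisely when $ \mathbf{d} P^T $ is orthogonal to all $ \mathbf{c} $, that is, when $ \mathbf{d} P^T \in \mathcal{D}^\perp $. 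Specializing $ P = {\rm diag}(A_1, A_2, \ldots, A_g) $ and using $ (P^{-1})^T = {\rm diag}((A_1^{-1})^T, (A_2^{-1})^T, \ldots, (A_g^{-1})^T) $, I obtain
\[
\left( \mathcal{C} \cdot {\rm diag}(A_1, A_2, \ldots, A_g) \right)^\perp = \mathcal{C}^\perp \cdot {\rm diag}\left( (A_1^{-1})^T, (A_2^{-1})^T, \ldots, (A_g^{-1})^T \right).
\]

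Next I would invoke the classical fact that the dual of an MDS code is MDS, which itself follows from Lemma \ref{lemma distance from parity-check matrix}: if $ \mathcal{D} $ is an $ [gr,k] $ MDS code with generator matrix $ G $, then any $ k $ columns of $ G $ are linearly independent, and since $ G $ is a parity-check matrix of $ \mathcal{D}^\perp $, that lemma forces $ {\rm d}_H(\mathcal{D}^\perp) \geq k+1 $, so $ \mathcal{D}^\perp $ attains the Singleton bound. Because $ \mathcal{C} \cdot {\rm diag}(A_1, \ldots, A_g) $ is MDS exactly when its dual is, the displayed identity shows that $ \mathcal{C} $ is MSRD if and only if $ \mathcal{C}^\perp \cdot {\rm diag}((A_1^{-1})^T, \ldots, (A_g^{-1})^T) $ is MDS for all $ A_1, \ldots, A_g \in {\rm GL}_r(\mathbb{F}_q) $. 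Finally, since $ A \mapsto (A^{-1})^T $ is a bijection of $ {\rm GL}_r(\mathbb{F}_q) $ onto itself, as the $ A_i $ range over all invertible matrices over the base field so do the $ (A_i^{-1})^T $; hence the latter condition is precisely the assertion that $ \mathcal{C}^\perp $ is MSRD, and the equivalence is established.

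I do not expect any single step to be a genuine obstacle, as the result becomes a bookkeeping argument once these ingredients are isolated. The only place demanding care is the duality computation, where the row-vector convention and the transpose must be tracked precisely so that the block-diagonal structure, and in particular the fact that each block $ (A_i^{-1})^T $ still lies in $ {\rm GL}_r(\mathbb{F}_q) $, is preserved; it is exactly the property that $ A \mapsto (A^{-1})^T $ is a bijection of $ {\rm GL}_r(\mathbb{F}_q) $ that guarantees quantifying over all admissible block-diagonal multipliers on the $ \mathcal{C} $ side matches quantifying over all such multipliers on the $ \mathcal{C}^\perp $ side.
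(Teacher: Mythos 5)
Your argument is correct. Note, however, that the paper does not actually prove Lemma \ref{lemma dual of msrd is msrd}: it imports the statement wholesale from \cite{gsrws} (Th.~5 there), so there is no internal proof to compare against. Your derivation is a clean, self-contained alternative built entirely from tools already present in this manuscript: the characterization of MSRD codes via the block-diagonal action (Definition \ref{def msrd codes} and Proposition \ref{prop sum rank singleton}), the identity $ (\mathcal{D} \cdot P)^\perp = \mathcal{D}^\perp \cdot (P^{-1})^T $ (which you verify correctly under the row-vector convention $ \mathbf{c} \cdot \mathbf{d} = \mathbf{c}\mathbf{d}^T $), the classical fact that duals of linear MDS codes are MDS, and the observation that $ A \mapsto (A^{-1})^T $ permutes $ {\rm GL}_r(\mathbb{F}_q) $ and respects the block-diagonal structure. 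The one step you state rather than prove --- that any $ k $ columns of a generator matrix of a $ k $-dimensional MDS code are linearly independent --- is standard (a dependent set of $ k $ columns would produce a nonzero codeword of weight at most $ n-k $) but is not literally Lemma \ref{lemma distance from parity-check matrix}; it is its dual companion, and spelling it out would make the proof fully rigorous. Compared with the route through the cited reference, which develops the duality of sum-rank-metric codes within a broader theory of sum-rank supports, your argument is more elementary and arguably better suited to this paper, since it uses only Proposition \ref{prop sum rank singleton}, which the paper already relies on to define MSRD codes.
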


The previous lemma will be useful for our purposes, since we will construct MSRD codes by giving their parity-check matrices without worrying about computing their generator matrices, and proving that the parity-check matrices generate MSRD codes. Giving parity-check matrices will allow us to obtain higher information rates for smaller field sizes.
%

\subsection{PMDS codes} \label{subsec pmds codes}

We next revisit \textit{locally repairable codes} \cite{gopalan, kamath} and \textit{PMDS codes} \cite{blaum-RAID, gopalan-MR}. 

\begin{definition}[\textbf{Locally repairable codes \cite{gopalan, kamath}}] \label{def LRC}
Fix positive integers $ g $, $ r $ and $ \delta $, and set $ \nu = r + \delta - 1 $. A code $ \mathcal{C} \subseteq \mathbb{F}^n $ is a locally repairable code (LRC) with $ (r,\delta) $-localities if $ n = g \nu $ and we may partition $ [n] = \Gamma_1 \cup \Gamma_2 \cup \ldots \cup \Gamma_g $ such that
$$ \Gamma_i = [ (i-1)\nu + 1, i \nu ] \quad \textrm{and} \quad {\rm d}_H(\mathcal{C}_{\Gamma_i}) \geq \delta, $$
where $ \mathcal{C}_{\Gamma_i} \subseteq \mathbb{F}^\nu $ denotes the projection of $ \mathcal{C} $ onto the coordinates in $ \Gamma_i $, for $ i = 1,2, \ldots, g $. The set $ \Gamma_i $ is called the $ i $th \textit{local set} and $ \nu $ is the \textit{local-set size}. In many occasions, we only use the term \textit{locality} for the number $ r $, whereas $ \delta $ is called the \textit{local distance}. 
\end{definition}

LRCs as in \cite{gopalan, kamath} do not require pair-wise disjoint local sets, but we consider only this case since it is required for PMDS codes. Moreover, local sets need not be of the same size for PMDS codes \cite[Def. 5]{universal-lrc}, but we only consider this case for simplicity. \textit{Partial MDS (PMDS) codes}, introduced in \cite{blaum-RAID, gopalan-MR}, are those LRCs that may correct any erasure pattern that is information-theoretically correctable given the locality constraints in Definition \ref{def LRC}. Such patterns are exactly those with $ \delta - 1 $ erasures per local set and an extra $ h = gr - k $ erasures anywhere else, where $ k $ is the code dimension. This is equivalent to the following definition.

\begin{definition} [\textbf{PMDS codes \cite{blaum-RAID, gopalan-MR}}] \label{def PMDS codes}
A linear code $ \mathcal{C} \subseteq \mathbb{F}^n $ is a \textit{partial MDS (PMDS) code} with $ (r,\delta) $-localities if it is an LRC with $ (r,\delta) $-localities and, for any $ \Delta_i \subseteq \Gamma_i $ with $ | \Delta_i | = r $, for $ i = 1,2, \ldots,g $, the restricted code $ \mathcal{C}_\Delta \subseteq \mathbb{F}^{g r} $ is MDS, where $ \Delta = \bigcup_{i=1}^g \Delta_i $.
\end{definition} 

The following is Construction 1 in \cite{universal-lrc}.

\begin{construction} [\textbf{\cite{universal-lrc}}] \label{construction 1}
Fix positive integers $ g $ and $ r $, a \textit{base field} size $ q $ and an extension degree $ m \geq r $. The \textit{field of linearity} of our target codes is $ \mathbb{F} = \mathbb{F}_{q^m} $. Choose:
\begin{enumerate}
\item
\textit{Outer code}: A linear MSRD code $ \mathcal{C}_{out} \subseteq \mathbb{F}_{q^m}^{g r} $ for the partition $ (g,r) $ over $ \mathbb{F}_q $.
\item
\textit{Local code}: A linear MDS code $ \mathcal{C}_{loc} \subseteq \mathbb{F}_q^\nu $ of dimension $ r $ (over $ \mathbb{F}_q $).
\item
\textit{Global code}: Let $ \mathcal{C}_{glob} \subseteq \mathbb{F}_{q^m}^n $, where $ n = g \nu $, be given by
$$ \mathcal{C}_{glob} = \mathcal{C}_{out} \cdot {\rm diag}(A, A, \ldots, A), $$
$ g $ times, where $ A \in \mathbb{F}_q^{r \times \nu} $ is an arbitrary generator matrix of $ \mathcal{C}_{loc} $.
\end{enumerate}
\end{construction}

The following result is \cite[Th. 2]{universal-lrc}.

\begin{proposition} [\textbf{\cite{universal-lrc}}] \label{prop construction 1 is pmds}
The linear code $ \mathcal{C}_{glob} \subseteq \mathbb{F}_{q^m}^n $ from Construction \ref{construction 1} has dimension $ \dim (\mathcal{C}_{glob}) = \dim (\mathcal{C}_{out}) $ and is a PMDS code with $ (r,\delta) $-localities.
\end{proposition}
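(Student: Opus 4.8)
The plan is to prove the two claims separately: first that $\dim(\mathcal{C}_{glob}) = \dim(\mathcal{C}_{out})$, and second that $\mathcal{C}_{glob}$ is a PMDS code with $(r,\delta)$-localities.

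Let me understand the setup. We have an MSRD outer code $\mathcal{C}_{out} \subseteq \mathbb{F}_{q^m}^{gr}$. The global code is $\mathcal{C}_{glob} = \mathcal{C}_{out} \cdot \text{diag}(A_1, \ldots, A_g)$ where each $A_i \in \mathbb{F}_q^{r \times (r+\delta-1)}$ is a generator matrix of an MDS local code $\mathcal{C}_{loc}^{(i)}$ of dimension $r$ and length $\nu = r+\delta-1$.

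**Dimension.** The map $\mathbf{c} \mapsto \mathbf{c} \cdot \text{diag}(A_1,\ldots,A_g)$ is $\mathbb{F}_{q^m}$-linear from $\mathbb{F}_{q^m}^{gr}$ to $\mathbb{F}_{q^m}^n$. Since each $A_i$ has full row rank $r$ (being a generator matrix of a dimension-$r$ code), the block-diagonal matrix $\text{diag}(A_1,\ldots,A_g)$ has full row rank $gr$ over $\mathbb{F}_q$, hence over $\mathbb{F}_{q^m}$. Therefore the linear map is injective, so $\dim(\mathcal{C}_{glob}) = \dim(\mathcal{C}_{out}) = k$.

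**LRC property.** Each local set $\Gamma_i$ corresponds to the block where $A_i$ acts, and the projection $(\mathcal{C}_{glob})_{\Gamma_i}$ is contained in the row space of $A_i$, which is $\mathcal{C}_{loc}^{(i)}$, an MDS code with minimum distance $\delta$. A short argument (using that $\mathcal{C}_{out}$ projects surjectively onto the $i$-th block of $r$ coordinates, which follows from $\mathcal{C}_{out}$ being MDS hence of large dimension, or more carefully from the MSRD property) shows the projection equals $\mathcal{C}_{loc}^{(i)}$, so $\text{d}_H((\mathcal{C}_{glob})_{\Gamma_i}) \geq \delta$.

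**PMDS property — the main step.** This is where the MSRD property does the real work, and it is the crux. Fix any $\Delta_i \subseteq \Gamma_i$ with $|\Delta_i| = r$ for each $i$, and let $\Delta = \bigcup_i \Delta_i$. Puncturing $\mathcal{C}_{glob}$ to $\Delta$ corresponds to replacing each $A_i$ by its submatrix $A_i'$ of the $r$ columns indexed by $\Delta_i$. Since $\mathcal{C}_{loc}^{(i)}$ is MDS of dimension $r$, any $r$ columns of its generator matrix $A_i$ are linearly independent, so each $A_i' \in \mathbb{F}_q^{r \times r}$ is invertible, i.e.\ $A_i' \in \text{GL}_r(\mathbb{F}_q)$. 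Thus $(\mathcal{C}_{glob})_\Delta = \mathcal{C}_{out} \cdot \text{diag}(A_1', \ldots, A_g')$. By Definition \ref{def msrd codes} (the MSRD property of $\mathcal{C}_{out}$), multiplying by any block-diagonal matrix with invertible $\mathbb{F}_q$-blocks yields an MDS code; hence $(\mathcal{C}_{glob})_\Delta$ is MDS. This is exactly the PMDS condition in Definition \ref{def PMDS codes}.

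The main obstacle is the bookkeeping in the PMDS step: one must verify cleanly that puncturing the global code on $\Delta$ is the same operation as truncating each local generator matrix to the selected columns, and that the MDS property of the local codes guarantees exactly the invertibility needed to invoke the MSRD characterization. Once the correspondence $A_i \rightsquigarrow A_i' \in \text{GL}_r(\mathbb{F}_q)$ is set up, the result is an immediate application of the equivalent formulation of MSRD in Definition \ref{def msrd codes}, which is precisely why MSRD codes are the natural ingredient for PMDS constructions.
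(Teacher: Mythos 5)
Your proof is correct; the paper states this proposition without proof (it is imported as \cite[Th.~2]{universal-lrc}), and your argument is precisely the intended one: the restriction to $\Delta$ amounts to replacing each $A_i$ by an invertible $r\times r$ submatrix $A_i'\in{\rm GL}_r(\mathbb{F}_q)$ (guaranteed by the MDS property of the local codes), at which point the equivalent characterization of MSRD codes in Definition~\ref{def msrd codes} gives the MDS property of $(\mathcal{C}_{glob})_\Delta$ immediately. One tiny remark: in the LRC step, containment of $(\mathcal{C}_{glob})_{\Gamma_i}$ in the $\mathbb{F}_{q^m}$-span of the rows of $A_i$ already yields $ {\rm d}_H((\mathcal{C}_{glob})_{\Gamma_i})\geq\delta $, so the surjectivity claim you gesture at is not needed.
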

%
%

\subsection{Field sizes in applications of MSRD codes} \label{subsec other apps of SR metric}

Before constructing MSRD codes, it is crucial to know what we want in an MSRD code. The parameters of the ambient space are $ m $, $ r $ (matrix sizes), $ g $ (number of matrix sets) and $ q $ (base field size). However, the computational complexity of encoding and decoding with a linear (over $ \mathbb{F}_{q^m} $) code in $ \mathbb{F}_{q^m}^{gr} $ is governed by the size of the field of linearity $ q^m $. In the case of PMDS codes and multishot network coding \cite{multishot, secure-multishot}, the base field size $ q $ is not as important. Thus, when comparing MSRD codes for such applications, $ (m_1, q_1) $ is considered better than $ (m_2, q_2) $ if $ q_1^{m_1} < q_2^{m_2} $. However, in other applications, such as space-time coding \cite{space-time-kumar, mohannad} or crisscross error correction \cite{roth, multilayer}, we may not have such a flexibility on the pair of parameters $ (m,q) $, since $ \mathbb{F}_q $ may be fixed (it corresponds to the constellation in space-time coding and the array alphabet in crisscross error correction). Thus in such applications, if we fix $ q $, then it is desirable to obtain linear MSRD codes with smallest possible value of $ m $. This is because of the next proposition, which implies that if we find a linear MSRD code for a pair $ (q,m) $, then we may easily obtain a linear MSRD code for the pair $ (q, mM) $, for any positive integer $ M $. The proof is straightforward from the characterization in Proposition \ref{prop sum rank singleton}.

\begin{proposition}
For a linear code $ \mathcal{C} \subseteq \mathbb{F}_{q^m}^{gr} $, define $ \mathcal{C} \otimes \mathbb{F}_{q^{mM}} \subseteq \mathbb{F}_{q^{mM}}^{gr} $ as the $ \mathbb{F}_{q^{mM}} $-linear code with the same generator matrix as $ \mathcal{C} $ (which has entries in $ \mathbb{F}_{q^m} $). Then $ \dim_{\mathbb{F}_{q^m}} \left( \mathcal{C} \right) = \dim_{\mathbb{F}_{q^{mM}}} \left( \mathcal{C} \otimes \mathbb{F}_{q^{mM}} \right) $ and $ \mathcal{C} $ is MSRD if, and only if, so is $ \mathcal{C} \otimes \mathbb{F}_{q^{mM}} $, in both cases for the length partition $ (g, r) $ over the field $ \mathbb{F}_q $.
\end{proposition}

A difficult research problem, open in most cases, is to determine constraints in $ m $, $ q $ and $ q^m $ for the existence of MSRD codes and PMDS codes. This problem is a highly non-trivial extension of the \textit{MDS conjecture} (not even the asymptotic order of the size $ q^m $ of possible MSRD or PMDS codes is known in general, whereas we know that MDS codes exist if, and only if, the code length is at most linear in the field size \cite{reed-solomon, singleton}).

For PMDS codes, bounds on field sizes were given in \cite[Th. 3.5 and 3.8]{gopi}, and the case of one global parity is completely solved in \cite{complete}. For MSRD codes, bounds on the parameters were given in \cite[Th. 6.12]{alberto-fundamental}, which may be turned into bounds on field sizes. One of our MSRD codes meets the latter bounds, see Subsection \ref{subsec plugging a Hamming code}. 

\section{Extended Moore matrices} \label{sec conjugacy and ext moore matricex}

This section contains the main method for constructing parity-check matrices of MSRD codes. The section concludes with a definition of a general family of MSRD codes (Definition \ref{def general family of MSRD codes}). Such codes exist and are explicit as long as a certain sequence $ \left( \beta_1, \beta_2, \ldots, \beta_{\mu r} \right) \in \mathbb{F}_{q^m}^{\mu r} $ is known. Explicit constructions of such sequences will be deferred to Section \ref{sec technique of tensor products}.

\subsection{The definitions} \label{subsec conjugacy the definitions}

We fix the field automorphism $ \sigma : \mathbb{F}_{q^m} \longrightarrow \mathbb{F}_{q^m} $ given by $ \sigma(a) = a^q $, for $ a \in \mathbb{F}_{q^m} $.
The following definition is a particular case of \cite[Eq.~(2.5)]{lam-leroy}, but already appeared in \cite{lam}.

\begin{definition} [\textbf{\cite{lam, lam-leroy}}]
We define the equivalence relation $ \sim_\sigma $ in $ \mathbb{F}_{q^m} $ as $ a \sim_\sigma b $ if there exists $ c \in \mathbb{F}_{q^m}^* $ such that $ b = \sigma(c)c^{-1} a = c^{q-1} a $, for $ a,b \in \mathbb{F}_{q^m} $.
\end{definition}

It was shown in \cite[Cor. 1]{matroidal} that there are exactly $ q-1 $ non-zero equivalence classes in $ \mathbb{F}_{q^m} $ with respect to $ \sim_\sigma $, each of size $ (q^m-1) / (q-1) $. Furthermore, they are represented by powers of a primitive element, as observed in the paragraph after \cite[Def. 2]{universal-lrc}.

\begin{lemma} [\textbf{\cite{matroidal, universal-lrc}}] \label{lemma conjugacy classes less than q}
Let $ \gamma \in \mathbb{F}_{q^m}^* $ be a primitive element of $ \mathbb{F}_{q^m} $. Then $ \gamma^0, \gamma^1, \ldots, \gamma^{q-2} $ are pair-wise non-equivalent with respect to $ \sim_\sigma $.
\end{lemma}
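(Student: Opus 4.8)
The plan is to translate the conjugacy relation into the language of cosets of a multiplicative subgroup and then read off the statement from the cyclic structure of $ \mathbb{F}_{q^m}^* $. First I would record the elementary reformulation: two nonzero elements $ a, b \in \mathbb{F}_{q^m}^* $ are conjugate with respect to $ \sigma $ if, and only if, $ b a^{-1} $ lies in the set
$$ H = \{ c^{q-1} \mid c \in \mathbb{F}_{q^m}^* \} , $$
which is precisely the subgroup of $ (q-1) $-th powers of the multiplicative group $ \mathbb{F}_{q^m}^* $. Since the defining condition $ b = c^{q-1} a $ is symmetric and transitive once phrased as $ b a^{-1} \in H $, this shows directly that the nonzero conjugacy classes are exactly the cosets of $ H $ in $ \mathbb{F}_{q^m}^* $.

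Next I would compute $ H $ explicitly using the fact that $ \mathbb{F}_{q^m}^* $ is cyclic of order $ q^m - 1 $, generated by the primitive element $ \gamma $. In a cyclic group of order $ N = q^m - 1 $, the subgroup of $ d $-th powers with $ d = q-1 $ is generated by $ \gamma^{\gcd(d,N)} $ and has order $ N / \gcd(d,N) $. The key number-theoretic input is that $ q-1 $ divides $ q^m - 1 $ (indeed $ q^m - 1 = (q-1)(q^{m-1} + \cdots + q + 1) $), so $ \gcd(q-1, q^m-1) = q-1 $. Hence $ H $ is generated by $ \gamma^{q-1} $ and has order $ (q^m-1)/(q-1) $, giving exactly $ [\mathbb{F}_{q^m}^* : H] = q-1 $ cosets; this recovers the count of nonzero conjugacy classes already recalled above. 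It then remains to verify that $ \gamma^0, \gamma^1, \ldots, \gamma^{q-2} $ meet all $ q-1 $ cosets exactly once: for $ 0 \leq i < j \leq q-2 $, the quotient $ \gamma^{j-i} $ lies in $ H $ if, and only if, it is a power of $ \gamma^{q-1} $, i.e. if, and only if, $ (q-1) \mid (j-i) $ in the exponent arithmetic modulo $ q^m - 1 $; since $ 0 < j-i \leq q-2 < q-1 $, this fails, so $ \gamma^i $ and $ \gamma^j $ are non-conjugate. These $ q-1 $ pairwise non-conjugate elements must then exhaust the $ q-1 $ classes.

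I do not expect any genuine obstacle here; the content is the single observation that conjugacy is coset membership for the subgroup of $ (q-1) $-th powers. The only points requiring a little care are, first, that the relation is genuinely an equivalence relation and that $ 0 $ forms its own (zero) class, so that the powers of $ \gamma $, being nonzero, legitimately represent the nonzero classes; and second, the exponent bookkeeping modulo $ q^m - 1 $ in the final step, where one must use that $ j-i $ ranges strictly between $ 0 $ and $ q-1 $. Both are immediate once the cyclic-group picture is set up, so the proof is essentially a matter of assembling these standard facts in the right order.
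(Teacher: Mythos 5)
Your proof is correct. Note that the paper does not prove this lemma at all: it is quoted from \cite[Cor.~1]{matroidal} and the remark after \cite[Def.~2]{universal-lrc}, so there is no in-paper argument to compare against. Your reduction of conjugacy to coset membership in the subgroup $H$ of $(q-1)$-th powers, the computation $[\mathbb{F}_{q^m}^*:H]=\gcd(q-1,q^m-1)=q-1$, and the observation that $\gamma^{j-i}\in H$ would force $(q-1)\mid(j-i)$ constitute exactly the standard argument behind those citations, with all the necessary bookkeeping in place.
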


Moreover, the elements in $ \mathbb{F}_q^* $ represent the $ q-1 $ equivalence classes in $ \mathbb{F}_{q^m} $ with respect to $ \sim_\sigma $ if, and only if, $ q-1 $ and $ m $ are coprime \cite[Remark 27]{SR-BCH}.
%

We next define truncated norms. Again, the following definition is a particular case of \cite[Eq.~(2.3)]{lam-leroy}, but already appeared in \cite{lam}.

\begin{definition}[\textbf{\cite{lam, lam-leroy}}] \label{def truncated norms}
For $ a \in \mathbb{F}_{q^m} $ and $ i \in \mathbb{N} $, we define $ N_i(a) = \sigma^{i-1}(a) \cdots \sigma(a)a $.
\end{definition}

We may now define extended Moore matrices. 

\begin{definition} [\textbf{Extended Moore matrices}] \label{def moore matrix extended}
Let $ \mathbf{a} = ( a_1, a_2, \ldots, a_\ell ) \in (\mathbb{F}_{q^m}^*)^\ell $ be a vector of $ \ell $ pair-wise non-equivalent elements in $ \mathbb{F}_{q^m} $ with respect to $ \sim_\sigma $. Let $ \boldsymbol\beta = ( \beta_1, \beta_2, \ldots, \beta_\eta ) \in \mathbb{F}_{q^m}^{\eta} $ be an arbitrary vector, for some positive integer $ \eta $. For $ h = 1,2, \ldots, \ell \eta $, we define the \textit{extended Moore matrix} $ M_h(\mathbf{a}, \boldsymbol\beta) \in \mathbb{F}_{q^m}^{h \times (\ell \eta)} $ by $ M_h(\mathbf{a}, \boldsymbol\beta) = $
\begin{equation*}
 \left( \begin{array}{lll|c|lll}
\beta_1 & \ldots & \beta_\eta & \ldots & \beta_1 & \ldots & \beta_\eta \\
 \beta_1^q a_1 & \ldots & \beta_\eta^q a_1 & \ldots & \beta_1^q a_\ell & \ldots & \beta_\eta^q a_\ell \\
\beta_1^{q^2} N_2(a_1) & \ldots & \beta_\eta^{q^2} N_2(a_1) & \ldots &  \beta_1^{q^2} N_2(a_\ell) & \ldots &  \beta_\eta^{q^2} N_2(a_\ell) \\
\vdots & \ddots & \vdots & \ddots & \vdots & \ddots & \vdots \\
 \beta_1^{q^{h-1}} N_{h-1}(a_1) & \ldots &  \beta_\eta^{q^{h-1}} N_{h-1}(a_1) & \ldots &  \beta_1^{q^{h-1}} N_{h-1}(a_\ell) & \ldots &  \beta_\eta^{q^{h-1}} N_{h-1}(a_\ell) \\
\end{array} \right) .
\end{equation*}
\end{definition}

Such matrices extend the well known Moore matrices \cite[Lemma 3.51]{lidl} from one to several equivalence classes of $ \sim_\sigma $. They extend the matrices in \cite[p. 604]{linearizedRS} in the sense that the $ \eta $ components of $ \boldsymbol\beta \in \mathbb{F}_{q^m}^\eta $ over $ \mathbb{F}_{q^m} $ need not be linearly independent over $ \mathbb{F}_q $. The vector $ \boldsymbol\beta \in \mathbb{F}_{q^m}^\eta $ may be different (and possibly of different lengths $ \eta $) at each of the $ \ell $ blocks. All the results in this manuscript also hold in such a generality. However, we assume equal vectors $ \boldsymbol\beta \in \mathbb{F}_{q^m}^\eta $ at different blocks for simplicity.

We would like to turn the matrix $ M_h(\mathbf{a}, \boldsymbol\beta) $ into the parity-check matrix of an MSRD code. 
%
%
%
To that end, we will use Proposition \ref{prop sum rank singleton} and a characterization of when an extended Moore matrix is the parity-check matrix of an MDS code.

\subsection{MDS extended Moore matrices} \label{subsec extended moore matrices MDS}

In this subsection, we characterize when an extended Moore matrix is the parity-check matrix of an MDS code. We need the concept of $ h $-wise independence from \cite[Def. 9]{gopalan-MR}.

\begin{definition} [\textbf{\cite{gopalan-MR}}] \label{def t-wise indep}
We say that a subset $ T \subseteq \mathbb{F}_{q^m} $ is $ h $-wise independent over $ \mathbb{F}_q $ if any subset of at most $ h $ elements of $ T $ is linearly independent over $ \mathbb{F}_q $. Analogously, for a positive integer $ \eta $, we say that a vector $ \boldsymbol\beta = (\beta_1, \beta_2, \ldots, \beta_\eta) \in \mathbb{F}_{q^m}^\eta $ is $ h $-wise independent if $ T = \{ \beta_1, \beta_2, \ldots, \beta_\eta \} $ has size $ \eta $ and is $ h $-wise independent.
\end{definition}
%

We will also need the following four auxiliary lemmas. The first one is trivial. 

\begin{lemma} \label{lemma linear combination Moore matrix}
Fix integers $ 1 \leq \eta \leq h $ and $ a \in \mathbb{F}_{q^m}^* $. Assume that there exist $ \lambda_1, \lambda_2 , \ldots, \lambda_\eta \in \mathbb{F}_q $ such that $ \lambda_1 \beta_1 + \lambda_2 \beta_2 + \cdots + \lambda_\eta \beta_\eta = 0 $, for $ \beta_1, \beta_2, \ldots, \beta_\eta \in \mathbb{F}_{q^m} $. Then
$$ \left( \begin{array}{llll}
\beta_1 & \beta_2 & \ldots & \beta_\eta \\
 \beta_1^q a & \beta_2^q a & \ldots & \beta_\eta^q a \\
\beta_1^{q^2} N_2(a) & \beta_2^{q^2} N_2(a) & \ldots & \beta_\eta^{q^2} N_2(a) \\
\vdots & \vdots & \ddots & \vdots \\
 \beta_1^{q^{h-1}} N_{h-1}(a) & \beta_2^{q^{h-1}} N_{h-1}(a) & \ldots &  \beta_\eta^{q^{h-1}} N_{h-1}(a)\\
\end{array} \right) \left( \begin{array}{c}
\lambda_{1} \\
\lambda_{2} \\
\vdots \\
\lambda_\eta
\end{array} \right) = \mathbf{0} . $$
\end{lemma}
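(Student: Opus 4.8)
The plan is to verify that each of the $h$ entries of the column vector obtained from the product vanishes separately. Reading off the $k$th row of the displayed matrix, for $k = 1, 2, \ldots, h$, the $k$th entry of the product equals
$$ \sum_{i=1}^\eta \lambda_i \beta_i^{q^{k-1}} N_{k-1}(a) = N_{k-1}(a) \sum_{i=1}^\eta \lambda_i \beta_i^{q^{k-1}} , $$
where $ N_0(a) = 1 $ by the convention in Definition \ref{def truncated norms}. Since the scalar $ N_{k-1}(a) $ factors out, it suffices to show that $ \sum_{i=1}^\eta \lambda_i \beta_i^{q^{k-1}} = 0 $ for every $ k $.

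The key step will be to apply the iterated Frobenius $ \sigma^{k-1} $ to the hypothesis $ \sum_{i=1}^\eta \lambda_i \beta_i = 0 $. Indeed, $ \sigma $ is a field automorphism of $ \mathbb{F}_{q^m} $, hence so is $ \sigma^{k-1} $, and in particular it is additive and multiplicative; moreover each $ \lambda_i $ lies in $ \mathbb{F}_q $, which is exactly the fixed field of $ \sigma $, so $ \sigma^{k-1}(\lambda_i) = \lambda_i $. Applying $ \sigma^{k-1} $ to both sides of the dependence relation therefore gives
$$ 0 = \sigma^{k-1} \left( \sum_{i=1}^\eta \lambda_i \beta_i \right) = \sum_{i=1}^\eta \sigma^{k-1}(\lambda_i) \, \sigma^{k-1}(\beta_i) = \sum_{i=1}^\eta \lambda_i \beta_i^{q^{k-1}} , $$
which is precisely the identity needed. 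Combining this with the factorization above shows that every entry of the product is zero, proving the claim.

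I expect no genuine obstacle here: the entire content is the $ \mathbb{F}_q $-linearity of the Frobenius $ \sigma $, namely that it fixes $ \mathbb{F}_q $ pointwise while sending each $ \beta_i $ to $ \beta_i^{q^{k-1}} $, exactly as anticipated in the remark preceding the statement. The hypothesis $ \eta \leq h $ serves only to situate this as a submatrix of the extended Moore matrix of Definition \ref{def moore matrix extended} and plays no role in the computation; the argument would hold verbatim for any number of columns.
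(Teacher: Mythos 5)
Your argument is correct and is exactly the one the paper intends: the paper gives no written proof, stating only that the lemma "is immediate from the $\mathbb{F}_q$-linearity of $\sigma$," and your computation — factoring out $N_{k-1}(a)$ from each row and applying the automorphism $\sigma^{k-1}$, which fixes each $\lambda_i \in \mathbb{F}_q$, to the dependence relation — is precisely that observation spelled out.
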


The next lemma follows immediately from the invertibility of Moore matrices \cite[Lemma 3.51]{lidl} and Lemma \ref{lemma linear combination Moore matrix}.

\begin{lemma}  \label{lemma moore}
Fix integers $ 1 \leq \eta \leq h $ and $ a \in \mathbb{F}_{q^m}^* $. The dimension of the $ \mathbb{F}_q $-linear subspace generated by $ \beta_1, \beta_2, \ldots, \beta_\eta \in \mathbb{F}_{q^m} $ equals the rank of
$$  \left( \begin{array}{llll}
\beta_1 & \beta_2 & \ldots & \beta_\eta \\
 \beta_1^q a & \beta_2^q a & \ldots & \beta_\eta^q a \\
\beta_1^{q^2} N_2(a) & \beta_2^{q^2} N_2(a) & \ldots & \beta_\eta^{q^2} N_2(a) \\
\vdots & \vdots & \ddots & \vdots \\
 \beta_1^{q^{h-1}} N_{h-1}(a) & \beta_2^{q^{h-1}} N_{h-1}(a) & \ldots &  \beta_\eta^{q^{h-1}} N_{h-1}(a)\\
\end{array} \right) \in \mathbb{F}_{q^m}^{h \times \eta} . $$
\end{lemma}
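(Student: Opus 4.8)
The plan is to show that $ {\rm rank}(M) $ equals $ d := \dim_{\mathbb{F}_q} \langle \beta_1, \beta_2, \ldots, \beta_\eta \rangle_{\mathbb{F}_q} $, where $ M \in \mathbb{F}_{q^m}^{h \times \eta} $ denotes the matrix in the statement, by proving the two inequalities $ {\rm rank}(M) \leq d $ and $ {\rm rank}(M) \geq d $ separately. Note first that $ d \leq \eta \leq h $, so in particular the top $ d $ rows of $ M $ are available, a point that will matter for the lower bound.

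For the inequality $ {\rm rank}(M) \leq d $, I would reorder the columns (which does not change the rank) so that $ \beta_1, \beta_2, \ldots, \beta_d $ form an $ \mathbb{F}_q $-basis of $ \langle \beta_1, \ldots, \beta_\eta \rangle_{\mathbb{F}_q} $. For each index $ j > d $, write $ \beta_j = \sum_{s=1}^d \lambda_s \beta_s $ with $ \lambda_s \in \mathbb{F}_q $, and consider the coefficient vector encoding the $ \mathbb{F}_q $-relation $ \beta_j - \sum_{s=1}^d \lambda_s \beta_s = 0 $. Applying Lemma \ref{lemma linear combination Moore matrix} to this relation shows that the $ j $-th column $ \mathbf{m}_j $ of $ M $ satisfies $ \mathbf{m}_j = \sum_{s=1}^d \lambda_s \mathbf{m}_s $. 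Hence every column of $ M $ lies in the $ \mathbb{F}_{q^m} $-span of the first $ d $ columns, giving $ {\rm rank}(M) \leq d $.

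For the reverse inequality, I would exhibit a nonzero $ d \times d $ minor. Consider the submatrix $ M' $ formed by the top $ d $ rows and the columns indexed by $ \beta_1, \ldots, \beta_d $, whose $ (k+1,j) $-entry is $ \beta_j^{q^k} N_k(a) $ for $ 0 \leq k \leq d-1 $. Since $ N_k(a) $ depends only on the row index $ k $ (with $ N_0(a) = 1 $), I can factor $ N_k(a) $ out of each row, obtaining $ \det(M') = \left( \prod_{k=0}^{d-1} N_k(a) \right) \det W $, where $ W $ is the genuine Moore matrix with $ (k+1,j) $-entry $ \beta_j^{q^k} $. Because $ a \in \mathbb{F}_{q^m}^* $, each $ N_k(a) = a^{(q^k-1)/(q-1)} $ is nonzero, and because $ \beta_1, \ldots, \beta_d $ are $ \mathbb{F}_q $-linearly independent, $ \det W \neq 0 $ by the standard Moore determinant criterion \cite[Lemma 3.51]{lidl}. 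Therefore $ \det(M') \neq 0 $ and $ {\rm rank}(M) \geq d $.

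Combining the two inequalities yields $ {\rm rank}(M) = d $, as claimed. The two ingredients are exactly those flagged before the statement: Lemma \ref{lemma linear combination Moore matrix} transports $ \mathbb{F}_q $-relations among the $ \beta_j $ into $ \mathbb{F}_{q^m} $-relations among the columns, bounding the rank from above, while the Moore determinant formula certifies independence of the basis columns, bounding it from below. The only point requiring care is that rank is taken over $ \mathbb{F}_{q^m} $ whereas the dimension is over $ \mathbb{F}_q $; the Moore criterion is precisely the bridge between these two fields, so I do not expect a genuine obstacle here beyond the bookkeeping with the row factors $ N_k(a) $ and the verification that $ d \leq h $.
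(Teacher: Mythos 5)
Your proof is correct and is exactly the argument the paper intends: it states that the lemma ``follows immediately from the invertibility of Moore matrices \cite[Lemma 3.51]{lidl} and Lemma \ref{lemma linear combination Moore matrix}'', which are precisely your two ingredients (the latter for the upper bound on the rank via column relations, the former for the lower bound via the nonvanishing $d \times d$ Moore minor after factoring out the nonzero row factors $N_k(a)$). No discrepancies to report.
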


The next lemma may be easily derived by simplifying telescopic products.

\begin{lemma} \label{lemma delenclos}
With notation as in Definition \ref{def moore matrix extended}, it holds that
$$  M_h(\mathbf{a}, \boldsymbol\beta) \cdot {\rm diag} \left( \beta_1^{-1}, \ldots, \beta_\eta^{-1} | \ldots | \beta_1^{-1}, \ldots, \beta_\eta^{-1} \right) =  $$
\begin{equation*}
 \left( \begin{array}{lll|c|lll}
1 & \ldots & 1 & \ldots & 1 & \ldots & 1 \\
 \beta_1^{q-1} a_1 & \ldots & \beta_\eta^{q-1} a_1 & \ldots & \beta_1^{q-1} a_\ell & \ldots & \beta_\eta
^{q-1} a_\ell \\
 N_2(\beta_1^{q-1} a_1) & \ldots & N_2(\beta_\eta^{q-1} a_1) & \ldots &  N_2(\beta_1^{q-1} a_\ell) & \ldots &  N_2(\beta_\eta^{q-1} a_\ell) \\
\vdots & \ddots & \vdots & \ddots & \vdots & \ddots & \vdots \\
 N_{h-1}(\beta_1^{q-1} a_1) & \ldots &  N_{h-1}(\beta_\eta^{q-1} a_1) & \ldots & N_{h-1}(\beta_1^{q-1} a_\ell) & \ldots & N_{h-1}(\beta_\eta^{q-1} a_\ell) \\
\end{array} \right)  .
\end{equation*}
\end{lemma}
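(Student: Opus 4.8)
The plan is to verify the claimed identity entry by entry, since right-multiplication by the diagonal matrix $ {\rm diag}(\ldots, \beta_{i,j}^{-1}, \ldots) $ simply rescales the column indexed by the pair $ (i,j) $ by the factor $ \beta_{i,j}^{-1} $. By Definition \ref{def moore matrix extended}, the entry of $ M_h(\mathbf{a}, \boldsymbol\beta) $ in row $ k $ (for $ k = 1, 2, \ldots, h $) and in the column indexed by $ (i,j) $ is $ \beta_{i,j}^{q^{k-1}} N_{k-1}(a_i) $, where I adopt the convention $ N_0 = 1 $. Hence after the rescaling the corresponding entry becomes $ \beta_{i,j}^{q^{k-1}-1} N_{k-1}(a_i) $. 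The whole lemma therefore reduces to the single scalar identity
$$ \beta^{q^{k-1}-1} N_{k-1}(a) = N_{k-1}(\beta^{q-1} a), $$
for arbitrary $ \beta, a \in \mathbb{F}_{q^m} $ and every $ k \geq 1 $.

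To establish this identity I would expand the right-hand side using the definition $ N_{k-1}(x) = \sigma^{k-2}(x) \cdots \sigma(x) x $ together with the fact that each $ \sigma^j $ is a ring homomorphism, so that $ N_{k-1}(\beta^{q-1}a) = \prod_{j=0}^{k-2} \sigma^j(\beta)^{q-1} \cdot \prod_{j=0}^{k-2} \sigma^j(a) $. Writing $ \sigma^j(\beta) = \beta^{q^j} $, the $ \beta $-part equals $ \prod_{j=0}^{k-2} \beta^{q^j(q-1)} = \beta^{\sum_{j=0}^{k-2}(q^{j+1}-q^j)} $, and the exponent telescopes to $ q^{k-1}-1 $. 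The $ a $-part is precisely $ \prod_{j=0}^{k-2} \sigma^j(a) = N_{k-1}(a) $, which yields the claim. Alternatively, one may invoke the closed form $ N_i(a) = a^{(q^i-1)/(q-1)} $ from Definition \ref{def truncated norms} directly, since $ (\beta^{q-1}a)^{(q^{k-1}-1)/(q-1)} = \beta^{q^{k-1}-1} a^{(q^{k-1}-1)/(q-1)} $.

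Since this is a direct computation there is no genuine obstacle; the only point requiring a little care is the bookkeeping of the telescoping exponent sum, and keeping the convention $ N_0 = 1 $ consistent so that the first row collapses to all ones (the case $ k = 1 $, where the empty product is $ 1 $). Carrying out the same computation in every row $ k $ and every column $ (i,j) $ reproduces exactly the displayed matrix, completing the proof.
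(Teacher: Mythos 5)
Your proof is correct and matches the paper's intended argument: the paper simply remarks that the lemma "may be easily derived by simplifying telescopic products," and your entrywise verification of the identity $\beta^{q^{k-1}-1} N_{k-1}(a) = N_{k-1}(\beta^{q-1}a)$ via the telescoping exponent sum (or directly from $N_i(a) = a^{(q^i-1)/(q-1)}$) is exactly that computation, carried out in full.
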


The next lemma is a particular case of \cite[Th. 23 (1)]{lam}.

\begin{lemma} [\textbf{\cite{lam}}] \label{lemma lam}
Let $ a_1, a_2, \ldots, a_\ell  \in \mathbb{F}_{q^m}^* $ be $ \ell $ pair-wise non-equivalent elements in $ \mathbb{F}_{q^m} $ with respect to $ \sim_\sigma $. Take positive integers $ \eta_1, \eta_2, \ldots, \eta_\ell $ and $ h = \eta_1 + \eta_2 + \cdots + \eta_\ell $, and let $ \beta_{i,j} \in \mathbb{F}_{q^m} $, for $ j = 1,2, \ldots, \eta_i $ and for $ i = 1,2, \ldots, \ell $. Then 
\begin{equation*}
{\rm Rk} \left( \begin{array}{lll|c|lll}
1 & \ldots & 1 & \ldots & 1 & \ldots & 1 \\
 \beta_{1,1}^{q-1} a_1 & \ldots & \beta_{1,\eta_1}^{q-1} a_1 & \ldots & \beta_{\ell,1}^{q-1} a_\ell & \ldots & \beta_{\ell, \eta_\ell}^{q-1} a_\ell \\
 N_2(\beta_{1,1}^{q-1} a_1) & \ldots & N_2(\beta_{1, \eta_1}^{q-1} a_1) & \ldots &  N_2(\beta_{\ell,1}^{q-1} a_\ell) & \ldots &  N_2(\beta_{\ell, \eta_\ell}^{q-1} a_\ell) \\
\vdots & \ddots & \vdots & \ddots & \vdots & \ddots & \vdots \\
 N_{h-1}(\beta_{1,1}^{q-1} a_1) & \ldots &  N_{h-1}(\beta_{1, \eta_1}^{q-1} a_1) & \ldots & N_{h-1}(\beta_{\ell,1}^{q-1} a_\ell) & \ldots & N_{h-1}(\beta_{\ell, \eta_\ell}^{q-1} a_\ell) \\
\end{array} \right) =
\end{equation*}
$$ \sum_{i=1}^\ell {\rm Rk} \left( \begin{array}{llll}
1 & 1 & \ldots & 1 \\
 \beta_{i,1}^{q-1} a_i & \beta_{i,2}^{q-1} a_i & \ldots & \beta_{i, \eta_i}^{q-1} a_i \\
 N_2(\beta_{i,1}^{q-1} a_i) & N_2(\beta_{i,2}^{q-1} a_i) & \ldots & N_2(\beta_{i,\eta_i}^{q-1} a_i) \\
\vdots & \vdots & \ddots & \vdots \\
 N_{h-1}(\beta_{i,1}^{q-1} a_i) & N_{h-1}(\beta_{i,2}^{q-1} a_i) & \ldots &  N_{h-1}(\beta_{i,\eta_i}^{q-1} a_i)
\end{array} \right). $$
\end{lemma}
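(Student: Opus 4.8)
The plan is to recognize the claimed identity as the additivity of rank across distinct conjugacy classes. Writing $ b_{i,k} = \beta_{i,k}^{q-1} a_i = \sigma(\beta_{i,k})\beta_{i,k}^{-1} a_i $, each $ b_{i,k} $ is conjugate to $ a_i $, so the $ \eta_i $ columns of the $ i $th block involve only points lying in the conjugacy class of $ a_i $; since $ a_1, \ldots, a_\ell $ are pairwise non-conjugate, the $ \ell $ blocks draw their points from $ \ell $ distinct conjugacy classes. The $ (j,k) $ entry of block $ i $ is $ N_j(b_{i,k}) $ (with $ N_0 = 1 $), so the matrix is a ``norm-Vandermonde'' whose columns are $ \mathbf{v}(b) = (1, b, N_2(b), \ldots, N_{h-1}(b))^T $. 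The inequality $ \leq $ is automatic, since the rank of a column concatenation is at most the sum of the ranks of the blocks; the whole content is the reverse inequality, i.e.\ that the column spaces $ V_1, \ldots, V_\ell $ of the $ \ell $ blocks are in direct sum inside $ \mathbb{F}_{q^m}^h $.

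First I would dualize through skew-polynomial evaluation. For $ f = \sum_{j=0}^{h-1} f_j x^j \in \mathbb{F}_{q^m}[x;\sigma] $, the remainder evaluation at $ b $ (the remainder of right division of $ f $ by $ x - b $) is $ f(b) = \sum_j f_j N_j(b) $, so left-multiplying our matrix by the row $ (f_0, \ldots, f_{h-1}) $ yields the row $ (f(b_{i,k}))_{i,k} $. Hence the left kernel of the matrix is exactly the space of skew polynomials of degree $ < h $ vanishing on the point set $ S = \{ b_{i,k} \} $, and $ {\rm Rk} = h - \dim(\textrm{left kernel}) $. By the theory of \cite{lam, lam-leroy}, the polynomials vanishing on $ S $ form a principal left ideal generated by a minimal monic polynomial $ f_S $ whose degree I denote $ r(S) $; counting its degree-$ {<}h $ left multiples gives $ \dim(\textrm{left kernel}) = h - r(S) $ whenever $ r(S) \leq h $, so that $ {\rm Rk} = r(S) $. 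Applying the same computation to a single block (legitimate because $ h = N \geq \eta_i \geq r(S_i) $) gives $ {\rm Rk}(\textrm{block}_i) = r(S_i) $, where $ S_i $ is the $ i $th point set, and the identity to prove becomes the additivity $ r(S_1 \cup \cdots \cup S_\ell) = \sum_{i=1}^\ell r(S_i) $.

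The heart of the argument, and the step I expect to be the main obstacle, is this additivity of $ r $ over distinct conjugacy classes. I would prove it by induction on $ \ell $, reducing to $ V_{[\ell-1]} \cap V_\ell = \{ \mathbf{0} \} $, and exploiting the semilinear recursion $ N_{j+1}(b) = \sigma(N_j(b))\, b $: applying $ \sigma $ coordinatewise to a vanishing combination $ \sum_{i,k} c_{i,k}\mathbf{v}(b_{i,k}) = \mathbf{0} $ and using this recursion produces a second vanishing combination with coefficients $ \sigma(c_{i,k}) b_{i,k}^{-1} $, a degree-reduction step that eventually forces the cross-class part to collapse. The role of the non-conjugacy hypothesis is precisely to guarantee that the minimal vanishing polynomials of the $ S_i $ are pairwise totally coprime, so that the minimal polynomial of the union is their least common left multiple, of degree $ \sum_i r(S_i) $; this separation of conjugacy classes is exactly the content of \cite[Th. 23 (1)]{lam}, of which the present statement is the specialization obtained through the substitution above. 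Finally, I would tie $ r(S_i) $ back to the concrete quantities of the excerpt: noting via Lemma \ref{lemma delenclos} that normalization by a nonzero diagonal preserves rank, Lemma \ref{lemma moore} identifies $ r(S_i) = \dim_{\mathbb{F}_q}\langle \beta_{i,1}, \ldots, \beta_{i,\eta_i}\rangle $, which renders both sides of the claimed equality explicit.
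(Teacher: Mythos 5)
Your proposal is correct and ultimately rests on the same foundation as the paper: the paper offers no proof of this lemma beyond observing that it is a particular case of \cite[Th.~23~(1)]{lam}, and your argument, after a careful (and accurate) translation into the language of skew-polynomial remainder evaluation and minimal vanishing polynomials, delegates the crucial additivity of rank over distinct conjugacy classes to exactly that result. The extra scaffolding you supply (rank equals the degree $r(S)$ of the minimal vanishing polynomial, the reduction to $r(\bigcup_i S_i)=\sum_i r(S_i)$) is a faithful account of how Lam's theorem specializes here, so nothing is missing.
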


The main result of this subsection is the following theorem. 

\begin{theorem} \label{th extended moore matrix is mds}
Let $ \mathbf{a} = ( a_1, a_2, \ldots, a_\ell)  \in (\mathbb{F}_{q^m}^*)^\ell $ be a vector of $ \ell $ pair-wise non-equivalent elements in $ \mathbb{F}_{q^m} $ with respect to $ \sim_\sigma $. For $ h = 1,2, \ldots, \ell \eta $, the extended Moore matrix $ M_h(\mathbf{a}, \boldsymbol\beta) \in \mathbb{F}_{q^m}^{h \times (\ell \eta)} $ from Definition \ref{def moore matrix extended} is a full-rank parity-check matrix of an MDS code if, and only if, $ \boldsymbol\beta = ( \beta_1, \beta_2, $ $ \ldots, $ $ \beta_\eta ) \in \mathbb{F}_{q^m}^\eta $ is $ h $-wise independent over $ \mathbb{F}_q $.
\end{theorem}
\begin{proof}
First, assume that $ ( \beta_1, \beta_2, \ldots, \beta_\eta ) $ is not $ h $-wise independent over $ \mathbb{F}_q $. Then $ M_h(\mathbf{a}, \boldsymbol\beta) $ contains an $ h \times h $ submatrix that is not invertible by Lemma \ref{lemma linear combination Moore matrix}.

Conversely, assume that $ ( \beta_1, \beta_2, \ldots, \beta_\eta ) $ is $ h $-wise independent over $ \mathbb{F}_q $. Take an arbitrary $ h \times h $ submatrix $ M^\prime \in \mathbb{F}_{q^m}^{h \times h} $ of $ M_h(\mathbf{a}, \boldsymbol\beta) $, and let $ 0 \leq \eta_i \leq \min \{ h, \eta \} $ be the number of columns from the $ i $th block of $ \eta $ columns in $ M_h(\mathbf{a}, \boldsymbol\beta) $ appearing in $ M^\prime $, for $ i = 1,2, \ldots, \ell $. Note that $ h = \eta_1 + \eta_2 + \cdots + \eta_\ell $. Since $ ( \beta_1, \beta_2, \ldots, \beta_\eta ) $ is $ h $-wise independent over $ \mathbb{F}_q $ and $ \eta_i \leq h $, then the $ i $th block of $ \eta_i $ columns in $ M^\prime $ forms an $  \eta_i  \times h $ matrix of full rank $ \eta_i $ by Lemma \ref{lemma moore}, for $ i = 1,2, \ldots, \ell $. Finally, by combining Lemmas \ref{lemma delenclos} and \ref{lemma lam}, we conclude that $ {\rm Rk}(M^\prime) =  \eta_1 + \eta_2 + \cdots + \eta_\ell  = h $, and therefore $ M^\prime \in \mathbb{F}_{q^m}^{h \times h} $ is invertible. Hence $ M_h(\mathbf{a}, \boldsymbol\beta) $ is MDS and we are done.
\end{proof}

\subsection{MSRD extended Moore matrices} \label{subsec extended moore matrices MSRD}

In this subsection, we characterize when an extended Moore matrix is the parity-check matrix of an MSRD code. We start by combining Proposition \ref{prop sum rank singleton}, Lemma \ref{lemma dual of msrd is msrd}, Theorem \ref{th extended moore matrix is mds} and the $ \mathbb{F}_q $-linearity of the map $ \sigma $.

\begin{proposition} \label{prop extended moore matrix is msrd}
Let $ \mathbf{a} = ( a_1, a_2, \ldots, a_\ell)  \in (\mathbb{F}_{q^m}^*)^\ell $ be a vector of $ \ell $ pair-wise non-equivalent elements in $ \mathbb{F}_{q^m} $ with respect to $ \sim_\sigma $. Let $ \boldsymbol\beta = (\beta_1, \beta_2, \ldots, \beta_{\mu r}) \in \mathbb{F}_{q^m}^{\mu r} $, for positive integers $ \mu $ and $ r $, and set $ g = \ell \mu $. For $ h = 1,2, \ldots, gr $, the extended Moore matrix $ M_h(\mathbf{a}, \boldsymbol\beta) \in \mathbb{F}_{q^m}^{h \times (gr)} $ from Definition \ref{def moore matrix extended} is a full-rank parity-check matrix of an MSRD code  for the partition $ (g,r) $ over $ \mathbb{F}_q $ if, and only if, for all $ A_1, A_2, \ldots, A_\mu \in {\rm GL}_r (\mathbb{F}_q) $, the vector
$$ (\beta_1, \beta_2, \ldots, \beta_{\mu r}) \cdot {\rm diag} (A_1, A_2, \ldots, A_\mu) \in \mathbb{F}_{q^m}^{\mu r} $$
is $ h $-wise independent over $ \mathbb{F}_q $.
\end{proposition}

Our main characterization is the following theorem.

\begin{theorem} \label{th extended moore matrix is msrd sufficient}
Let $ \mathbf{a} = ( a_1, a_2, \ldots, a_\ell)  \in (\mathbb{F}_{q^m}^*)^\ell $ be a vector of $ \ell $ pair-wise non-equivalent elements in $ \mathbb{F}_{q^m} $ with respect to $ \sim_\sigma $. Let $ \boldsymbol\beta = (\beta_1, \beta_2, \ldots, \beta_{\mu r}) \in \mathbb{F}_{q^m}^{\mu r} $, for positive integers $ \mu $ and $ r $, and set $ g = \ell \mu $. Define the $ \mathbb{F}_q $-linear subspace
\begin{equation}
 \mathcal{H}_i = \left\langle \beta_{(i-1)r+1}, \beta_{(i-1)r+2}, \ldots, \beta_{ir}  \right\rangle_{\mathbb{F}_q} \subseteq \mathbb{F}_{q^m},
\label{eq subspace from h-lin indep}
\end{equation}
for $ i = 1,2, \ldots, \mu $. Given $ 1 \leq h \leq gr $, the extended Moore matrix $ M_h(\mathbf{a}, \boldsymbol\beta) \in \mathbb{F}_{q^m}^{h \times (gr)} $ from Definition \ref{def moore matrix extended} is a full-rank parity-check matrix of an MSRD code for the partition $ (g,r) $ over $ \mathbb{F}_q $ if, and only if, the following two conditions hold for all $ i = 1,2, \ldots, \mu $:
\begin{enumerate}
\item
$ \dim_{\mathbb{F}_q}(\mathcal{H}_i) = r $, i.e., $ \beta_{(i-1)r+1}, \beta_{(i-1)r+2}, \ldots, \beta_{ir} $ are $ \mathbb{F}_q $-linearly independent, and
\item
$ \mathcal{H}_i \cap \left( \sum_{j \in \Gamma} \mathcal{H}_j \right) = \{ 0 \} $, for any set $ \Gamma \subseteq [\mu] $, such that $ i \notin \Gamma $ and $ |\Gamma| \leq \min \{ h,\mu \} -1 $.
\end{enumerate}
\end{theorem}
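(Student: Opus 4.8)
The plan is to derive the statement directly from Proposition \ref{prop extended moore matrix is msrd}. Because the hypothesis forces $\boldsymbol\beta_1 = \cdots = \boldsymbol\beta_\ell = (\beta_1, \ldots, \beta_{\mu r})$, the $h$-wise independence condition appearing there no longer depends on the index $i$, so $M_h(\mathbf{a}, \boldsymbol\beta)$ is MSRD for $(g,r)$ over $\mathbb{F}_q$ if and only if, for every choice of $A_1, \ldots, A_\mu \in {\rm GL}_r(\mathbb{F}_q)$, the vector $(\beta_1, \ldots, \beta_{\mu r}) \cdot {\rm diag}(A_1, \ldots, A_\mu)$ is $h$-wise independent over $\mathbb{F}_q$. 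The first step is to record the elementary effect of the block transformation: right-multiplying the $j$-th block $(\beta_{(j-1)r+1}, \ldots, \beta_{jr})$ by $A_j$ replaces its entries by $r$ elements that still span $\mathcal{H}_j$. When $\dim_{\mathbb{F}_q}(\mathcal{H}_j) = r$, these transformed entries range over all ordered $\mathbb{F}_q$-bases of $\mathcal{H}_j$ as $A_j$ varies; when $\dim_{\mathbb{F}_q}(\mathcal{H}_j) < r$, the original entries satisfy a nontrivial $\mathbb{F}_q$-relation, and by placing its coefficient vector in a column of $A_j$ one can force a prescribed transformed entry to be $0$. I would prove these two facts first, as they are the only places where the group action enters.

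The second step is to repackage Conditions 1 and 2 into a single statement suited to the counting argument: they hold for all $i$ if and only if any $t := \min\{h,\mu\}$ of the subspaces $\mathcal{H}_1, \ldots, \mathcal{H}_\mu$ form an $\mathbb{F}_q$-direct sum, each summand having dimension $r$. Indeed, Condition 1 supplies $\dim_{\mathbb{F}_q}(\mathcal{H}_j) = r$, while Condition 2 (taking $\Gamma = \{i_2, \ldots, i_t\}$ with $i = i_1$, and permuting the roles) is exactly the ``each summand meets the sum of the others trivially'' criterion that characterizes directness of a sum of $t$ subspaces. This reformulation makes the range $|\Gamma| \le \min\{h,\mu\}-1$ transparent.

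For sufficiency, assuming the direct-sum reformulation, I would take an arbitrary transformation and an arbitrary subset $S$ of at most $h$ transformed entries, and let $\Gamma' \subseteq [\mu]$ be the set of blocks it meets. Since $|\Gamma'| \le |S| \le h$ and $|\Gamma'| \le \mu$, we get $|\Gamma'| \le t$, so the subspaces $\{ \mathcal{H}_j : j \in \Gamma' \}$ are in direct sum. Within each block the chosen entries lie in a basis of $\mathcal{H}_j$ (Condition 1), hence are $\mathbb{F}_q$-independent, and the direct sum glues these local independencies into global independence of $S$; thus the vector is $h$-wise independent for every transformation. For necessity I argue by contraposition. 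If some $\dim_{\mathbb{F}_q}(\mathcal{H}_i) < r$, the second fact from Step 1 forces a zero transformed entry, a dependent subset of size $1 \le h$, so Condition 1 is necessary for every $h \ge 1$. If instead all $\dim_{\mathbb{F}_q}(\mathcal{H}_j) = r$ but Condition 2 fails, I pick $0 \neq v \in \mathcal{H}_i \cap (\sum_{j \in \Gamma} \mathcal{H}_j)$ with $i \notin \Gamma$ and $|\Gamma| \le t-1$, write $v = \sum_{j \in \Gamma} w_j$ with $w_j \in \mathcal{H}_j$, and use the first fact to realize $v$ and each nonzero $w_j$ as single transformed entries in their (distinct) blocks; the relation $v - \sum_{j} w_j = 0$ is then a nontrivial dependence among at most $|\Gamma| + 1 \le t \le h$ entries, contradicting $h$-wise independence.

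The main obstacle I anticipate is not any single deep step but the careful bookkeeping of the group action in Step 1: in particular, verifying that an arbitrary nonzero target $v \in \mathcal{H}_i$ (respectively a prescribed relation when $\dim_{\mathbb{F}_q}(\mathcal{H}_j) < r$) can be realized as a column of a genuinely \emph{invertible} matrix $A_j$, since this is precisely what converts an abstract intersection or dependence among the $\mathcal{H}_j$ into an honest column dependence of the transformed Moore matrix. Matching the two inequalities $|\Gamma'| \le \min\{h,\mu\}$ (sufficiency) and $|\Gamma| + 1 \le \min\{h,\mu\}$ (necessity) against the range in Condition 2 is the other place demanding care.
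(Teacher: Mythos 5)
Your proposal is correct and follows essentially the same route as the paper: reduce to $h$-wise independence of the transformed vector via Proposition \ref{prop extended moore matrix is msrd}, prove sufficiency by gluing the per-block independence through the direct-sum condition on the $\mathcal{H}_j$, and prove necessity by exhibiting an invertible block transformation that turns a failed condition into an explicit $\mathbb{F}_q$-dependence among at most $h$ transformed entries. The only cosmetic difference is that the paper writes the matrices $A_j$ explicitly (identity with the relation coefficients in the last column), whereas you invoke the equivalent fact that a nonzero element of an $r$-dimensional $\mathcal{H}_j$ (or a prescribed relation) can be realized in a column of some $A_j \in {\rm GL}_r(\mathbb{F}_q)$.
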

\begin{proof}
We prove both implications separately.

$ \Longleftarrow $): Take matrices $ A_1, A_2, \ldots, A_\mu \in {\rm GL}_r(\mathbb{F}_q) $. Condition 1 implies that $ \beta^\prime_{(i-1)r+1}, $ $ \beta^\prime_{(i-1)r+2}, $ $ \ldots, $ $ \beta^\prime_{ir} $ $ \in \mathbb{F}_{q^m} $ are linearly independent over $ \mathbb{F}_q $, where
$$ (\beta^\prime_{(i-1)r+1}, \beta^\prime_{(i-1)r+2}, \ldots, \beta^\prime_{ir}) = (\beta_{(i-1)r+1}, \beta_{(i-1)r+2}, \ldots, \beta_{ir}) \cdot A_i \in \mathbb{F}_{q^m}^r, $$
for all $ i = 1,2, \ldots, \mu $. Next, fix an index $ i = 1,2, \ldots, \mu $, and take a subset $ \Gamma \subseteq [\mu] $, such that $ i \notin \Gamma $ and $ |\Gamma| \leq \min \{ h,\mu \} -1 $. Condition 2 and the $ \mathbb{F}_q $-linear independence of each set $ \{ \beta^\prime_{(j-1)r+1}, \beta^\prime_{(j-1)r+2}, $ $ \ldots, $ $ \beta^\prime_{jr} \} $ imply that the set
\begin{equation}
\bigcup_{j \in \Gamma \cup \{ i\}} \left\lbrace \beta^\prime_{(j-1)r+1}, \beta^\prime_{(j-1)r+2}, \ldots, \beta^\prime_{jr} \right\rbrace \subseteq \mathbb{F}_{q^m}
\label{eq set to be linearly indep for Moore msrd}
\end{equation}
is linearly independent over $ \mathbb{F}_q $. Since every subset of size at most $ h $ of $ \{ \beta^\prime_1, \beta^\prime_2, \ldots, \beta^\prime_{\mu r} \} $ is contained in a set of the form (\ref{eq set to be linearly indep for Moore msrd}), we deduce that the vector $ ( \beta^\prime_1, \beta^\prime_2, \ldots, \beta^\prime_{\mu r} ) $ is $ h $-wise linearly independent over $ \mathbb{F}_q $. Hence the extended Moore matrix $ M_h(\mathbf{a}, \boldsymbol\beta) \in \mathbb{F}_{q^m}^{h \times N} $ is MSRD by Proposition \ref{prop extended moore matrix is msrd}.

$ \Longrightarrow $): Assume first that Condition 1 does not hold for some $ i = 1,2, \ldots, \mu $. Without loss of generality, we may assume that there exist $ \lambda_1, \lambda_2, \ldots, \lambda_{r-1} \in \mathbb{F}_q $ such that
$$ \sum_{j=1}^{r-1} \lambda_j \beta_{(i-1)r + j} + \beta_{ir} = 0. $$
Thus if we define the invertible matrix
$$ A_i = \left( \begin{array}{c|c}
I_{r-1} & \begin{array}{c}
\lambda_{1} \\
\vdots \\
\lambda_{r-1}
\end{array} \\
\hline
\begin{array}{ccc}
0 & \ldots & 0
\end{array} & 1
\end{array} \right) \in {\rm GL}_r(\mathbb{F}_q), $$
where $ I_{r-1} \in {\rm GL}_{r-1}(\mathbb{F}_q) $ denotes the $ (r-1) \times (r-1) $ identity matrix, then it holds that
$$ (\beta_{(i-1)r+1}, \ldots, \beta_{ir -1}, \beta_{ir}) \cdot A_i = (\beta_{(i-1)r+1}, \ldots, \beta_{ir -1}, 0). $$
Clearly, $ (\beta_{(i-1)r+1}, \ldots, \beta_{ir -1}, 0) \in \mathbb{F}_{q^m}^r $ is not $ h $-wise independent, thus $ M_h(\mathbf{a}, \boldsymbol\beta) $ is not MSRD by Proposition \ref{prop extended moore matrix is msrd}.

Next, assume that Condition 2 does not hold for some $ i = 1,2, \ldots, \mu $. Then we may assume, without loss of generality, that there exists a subset $ \Gamma \subseteq [\mu] $ such that $ i \in \Gamma $, $ |\Gamma| \leq h $, and there exist $ \lambda_{j,u} \in \mathbb{F}_q $, for $ u = 1,2, \ldots, r $, for $ j \in \Gamma $, such that $ \lambda_{j,r} = 1 $, for $ j \in \Gamma $, and
$$ \sum_{j \in \Gamma} \sum_{u=1}^r \lambda_{j,u} \beta_{(j-1)r + u} = 0. $$
Define, for each $ j \in \Gamma $, the invertible matrix
$$ A_j = \left( \begin{array}{c|c}
I_{r-1} & \begin{array}{c}
\lambda_{j,1} \\
\vdots \\
\lambda_{j,r-1}
\end{array} \\
\hline
\begin{array}{ccc}
0 & \ldots & 0
\end{array} & 1
\end{array} \right) \in {\rm GL}_r(\mathbb{F}_q), $$
and define, for convenience, $ A_j = I_r \in {\rm GL}_r(\mathbb{F}_q) $ if $ j \notin \Gamma $. If we set
$$ (\beta^\prime_1, \beta^\prime_2, \ldots, \beta^\prime_{\mu r}) = (\beta_1, \beta_2, \ldots, \beta_{\mu r}) \cdot {\rm diag}(A_1, A_2, \ldots, A_\mu), $$
then it holds that
$$ \sum_{j \in \Gamma} \beta^\prime_{jr} = \sum_{j \in \Gamma} \sum_{u=1}^r \lambda_{j,u} \beta_{(j-1)r + u} = 0. $$
Since $ |\Gamma| \leq h $, then the vector $ (\beta^\prime_1, \beta^\prime_2, \ldots, \beta^\prime_{\mu r}) $ is not $ h $-wise independent over $ \mathbb{F}_q $, hence $ M_h(\mathbf{a}, \boldsymbol\beta) $ is not MSRD by Proposition \ref{prop extended moore matrix is msrd}.
\end{proof}

Therefore, we have the following general family of MSRD codes. Since we may puncture a linear MSRD code to obtain a shorter linear MSRD code \cite[Cor. 7]{gsrws}, we will assume from now on that $ \ell = q-1 $, which is the maximum value of $ \ell $ that we may choose as explained in Subsection \ref{subsec conjugacy the definitions}.

\begin{definition} \label{def general family of MSRD codes}
Let $ \mathbf{a} = ( a_1, a_2, \ldots, a_{q-1} ) \in (\mathbb{F}_{q^m}^*)^{q-1} $ be a vector of $ q-1 $ pair-wise non-equivalent elements in $ \mathbb{F}_{q^m} $ with respect to $ \sim_\sigma $. Let $ \boldsymbol\beta = (\beta_1, \beta_2, \ldots, \beta_{\mu r}) \in \mathbb{F}_{q^m}^{\mu r} $ satisfy Conditions 1 and 2 in Theorem \ref{th extended moore matrix is msrd sufficient}, and set $ g = (q-1) \mu $. For $ h = 1,2, \ldots, gr $, we define the following $ ( gr -h) $-dimensional linear MSRD code for the partition $ (g,r) $ over the base field $ \mathbb{F}_q $:
$$ \mathcal{C}_k(\mathbf{a}, \boldsymbol\beta) = \left\lbrace \mathbf{y} \in \mathbb{F}_{q^m}^{gr} \mid M_h(\mathbf{a}, \boldsymbol\beta) \mathbf{y}^T = \mathbf{0} \right\rbrace . $$
\end{definition}
%
%
%

\section{Explicit constructions of MSRD codes} \label{sec technique of tensor products} 

What is missing in Definition \ref{def general family of MSRD codes} is finding the sequence $ (\beta_1, \beta_2, \ldots, \beta_{\mu r}) \in \mathbb{F}_{q^m}^{\mu r} $ satisfying Conditions 1 and 2 in Theorem \ref{th extended moore matrix is msrd sufficient}. In this section, we provide a technique for \textit{explicitly} constructing such sequences. This method provides several \textit{explicit} subfamilies of the codes in Definition \ref{def general family of MSRD codes}, where the vector $ \mathbf{a} \in (\mathbb{F}_{q^m}^*)^{q-1} $ can be explicitly chosen as in Lemma \ref{lemma conjugacy classes less than q} or using all of the elements of $ \mathbb{F}_q^* $ when $ q-1 $ and $ m $ are coprime \cite[Remark 27]{SR-BCH}.

\subsection{Tensor products or field reduction} \label{subsec technique of tensor products} 

In this subsection, we explore tensor products of sequences over $ \mathbb{F}_{q^r} $ and $ \mathbb{F}_{q^m} $. This technique is inspired by \cite[Sec. IV-B]{gabrys}. However, the codes obtained in \cite[Sec. IV-B]{gabrys} and in this work are not equivalent (by inspecting their parameters). Our technique is also connected to finite geometry. Families of subspaces satisfying Conditions 1 and 2 as in Theorem \ref{th extended moore matrix is msrd sufficient} are called \textit{arcs} or \textit{pseudo-arcs} in finite geometry \cite[Sec. 4]{ball-additive}. The technique that we use here to construct them is equivalent to \textit{field reduction} \cite[Sec. 5]{ball-additive}. 

For the remainder of this section, we will fix an ordered basis $ \boldsymbol\alpha = ( \alpha_1, \alpha_2, \ldots, \alpha_r ) \in \mathbb{F}_{q^r}^r $ of $ \mathbb{F}_{q^r} $ over $ \mathbb{F}_q $. We will also assume from now on that $ m = r \rho $ (hence $ \mathbb{F}_{q^r} \subseteq \mathbb{F}_{q^m} $), for some positive integer $ \rho $. Choose a vector
\begin{equation}
\boldsymbol\gamma = (\gamma_1, \gamma_2, \ldots, \gamma_\mu) \in \mathbb{F}_{q^m}^\mu .
\label{eq def vector gamma for tensor prod}
\end{equation}
Define the \textit{tensor product} of $ \boldsymbol\alpha $ with $ \boldsymbol\gamma $ as
\begin{equation}
(\beta_1, \beta_2, \ldots, \beta_{\mu r}) = \boldsymbol\alpha \otimes \boldsymbol\gamma = (\alpha_1 \gamma_1, \ldots, \alpha_r \gamma_1 | \ldots | \alpha_1 \gamma_\mu, \ldots, \alpha_r \gamma_\mu) \in \mathbb{F}_{q^m}^{\mu r}.
\label{eq def tensor prod}
\end{equation}
In other words, $ (\beta_{(i-1)r+1}, \beta_{(i-1)r+2}, \ldots, \beta_{ir}) = \gamma_i \boldsymbol\alpha \in \mathbb{F}_{q^m}^r $, for $ i = 1,2, \ldots, \mu $.

The proof of the following theorem is straightforward and is left to the reader.

\begin{theorem} \label{th tensor prod for msrd}
The vector $ (\beta_1, \beta_2, \ldots, \beta_{\mu r}) \in \mathbb{F}_{q^m}^{\mu r} $ in (\ref{eq def tensor prod}) satisfies Conditions 1 and 2 in Theorem \ref{th extended moore matrix is msrd sufficient}, for $ i = 1,2, \ldots, \mu $, if and only if, the vector $ \boldsymbol\gamma = (\gamma_1, \gamma_2, \ldots, \gamma_\mu) \in \mathbb{F}_{q^m}^\mu $ is $ t $-wise independent over $ \mathbb{F}_{q^r} $, for $ t = \min \{ h,\mu \} $.
\end{theorem}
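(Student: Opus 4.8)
The plan is to translate Conditions 1 and 2 of Theorem \ref{th extended moore matrix is msrd sufficient} directly into statements about the subspaces $\mathcal{H}_i$ and then to recognize those statements as the $\mathbb{F}_{q^r}$-linear independence of the $\gamma_i$. Recall that here $\beta_{(i-1)r+u} = \alpha_u \gamma_i$, so that the subspace $\mathcal{H}_i = \langle \alpha_1\gamma_i, \ldots, \alpha_r\gamma_i\rangle_{\mathbb{F}_q} = \gamma_i \cdot \langle \alpha_1,\ldots,\alpha_r\rangle_{\mathbb{F}_q} = \gamma_i \mathbb{F}_{q^r}$, using that $\boldsymbol\alpha$ is an $\mathbb{F}_q$-basis of $\mathbb{F}_{q^r}$ and that multiplication by $\gamma_i \in \mathbb{F}_{q^m}^*$ is $\mathbb{F}_q$-linear (the case $\gamma_i = 0$ I will dispose of separately, since it makes both the independence and Condition 1 fail simultaneously). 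Thus $\mathcal{H}_i = \gamma_i \mathbb{F}_{q^r}$ is the $\mathbb{F}_q$-image of the one-dimensional $\mathbb{F}_{q^r}$-space $\gamma_i \mathbb{F}_{q^r}$, and Condition 1 ($\dim_{\mathbb{F}_q}\mathcal{H}_i = r$) holds exactly when $\gamma_i \neq 0$.

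**The core equivalence.** The key observation is that for any subset $\Gamma \subseteq [\mu]$, the $\mathbb{F}_q$-subspace $\sum_{j \in \Gamma} \mathcal{H}_j = \sum_{j \in \Gamma} \gamma_j \mathbb{F}_{q^r}$ is in fact the $\mathbb{F}_{q^r}$-linear span $\langle \gamma_j : j \in \Gamma\rangle_{\mathbb{F}_{q^r}}$, because an $\mathbb{F}_q$-sum of $\mathbb{F}_{q^r}$-subspaces is already closed under $\mathbb{F}_{q^r}$-scaling. Using the reformulation in Remark \ref{remark conditions 1 and 2 and field restr}, Conditions 1 and 2 together are equivalent to $\dim_{\mathbb{F}_q}\left(\sum_{i \in \Gamma}\mathcal{H}_i\right) = r|\Gamma|$ for every $\Gamma$ with $|\Gamma| = \min\{h,\mu\} = t$. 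Since each $\mathcal{H}_i$ is an $\mathbb{F}_{q^r}$-line and $\dim_{\mathbb{F}_q} = r\cdot\dim_{\mathbb{F}_{q^r}}$, this dimension count becomes $\dim_{\mathbb{F}_{q^r}}\langle \gamma_j : j \in \Gamma\rangle_{\mathbb{F}_{q^r}} = |\Gamma| = t$, which is precisely the statement that the $t$ elements $\{\gamma_j : j \in \Gamma\}$ are $\mathbb{F}_{q^r}$-linearly independent. Quantifying over all $\Gamma$ of size $t$ gives exactly $t$-wise independence of $\boldsymbol\gamma$ over $\mathbb{F}_{q^r}$, which is the desired conclusion.

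**Execution and the main subtlety.** Concretely, I would first record the identity $\mathcal{H}_i = \gamma_i\mathbb{F}_{q^r}$ and note $\dim_{\mathbb{F}_q}\mathcal{H}_i \in \{0, r\}$ according as $\gamma_i$ is zero or not, so that Condition 1 $\Longleftrightarrow$ all $\gamma_i \neq 0$. Then I would prove the sum identity $\sum_{j\in\Gamma}\gamma_j\mathbb{F}_{q^r} = \langle\gamma_j : j\in\Gamma\rangle_{\mathbb{F}_{q^r}}$ and invoke $\dim_{\mathbb{F}_q}V = r\dim_{\mathbb{F}_{q^r}}V$ for any $\mathbb{F}_{q^r}$-subspace $V$. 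Finally I would run the equivalence through Remark \ref{remark conditions 1 and 2 and field restr} in both directions. The step requiring the most care is verifying that $t$-wise $\mathbb{F}_{q^r}$-independence of $\boldsymbol\gamma$ already forces all $\gamma_i \neq 0$ (so Condition 1 is subsumed): when $t = \min\{h,\mu\} \geq 1$, any single $\gamma_i$ forms a subset of size at most $t$, hence is $\mathbb{F}_{q^r}$-independent and in particular nonzero — so the two conditions package together cleanly. This is why the statement can assert a clean biconditional with a single hypothesis on $\boldsymbol\gamma$; the author's remark that the proof is ``straightforward and is left to the reader'' is accurate once this bookkeeping is in place.
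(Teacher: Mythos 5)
Your proof is correct and is precisely the argument the paper intends: the paper states that the proof of Theorem \ref{th tensor prod for msrd} is ``straightforward and left to the reader,'' and your identification $\mathcal{H}_i = \gamma_i\mathbb{F}_{q^r}$, the observation that $\sum_{j\in\Gamma}\mathcal{H}_j = \langle \gamma_j : j\in\Gamma\rangle_{\mathbb{F}_{q^r}}$, and the dimension count $\dim_{\mathbb{F}_q} V = r\dim_{\mathbb{F}_{q^r}} V$ via Remark \ref{remark conditions 1 and 2 and field restr} are exactly the intended bookkeeping. Your closing remark that $t$-wise independence subsumes Condition 1 (since singletons are subsets of size at most $t$) correctly ties off the only subtlety.
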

%

We may construct $ t $-wise independent vectors by using the following equivalence, which  has been used previously in the PMDS literature in \cite[Th. 17]{gopalan-MR}, \cite[Lemma 7]{gabrys} and \cite{guruswami-PMDS}. 
%
We will assume from now on that $ \rho \leq \mu $. The following result follows by combining Definition \ref{def t-wise indep}, the $ \mathbb{F}_{q^r} $-linearity of the map $ M_{\boldsymbol\delta} $ and \cite[Th. 10, p. 33]{macwilliamsbook}.

\begin{lemma} \label{lemma t-wise indep means minimum hamming dist}
Let $ \boldsymbol\delta \in \mathbb{F}_{q^m}^\rho $ be an ordered basis of $ \mathbb{F}_{q^m} $ over $ \mathbb{F}_{q^r} $. Consider the matrix representation map $ M_{\boldsymbol\delta} : \mathbb{F}_{q^m}^\mu \longrightarrow \mathbb{F}_{q^r}^{\rho \times \mu} $, as in (\ref{eq def matrix representation map}), and define $ H_{\boldsymbol\gamma} = M_{\boldsymbol\delta}(\boldsymbol\gamma) \in \mathbb{F}_{q^r}^{\rho \times \mu} $. The vector $ \boldsymbol\gamma \in \mathbb{F}_{q^m}^\mu $ is $ t $-wise independent over $ \mathbb{F}_{q^r} $ if, and only if, $ {\rm d}_H(\mathcal{C}_{\boldsymbol\gamma}) \geq t+1 $, for 
\begin{equation}
\mathcal{C}_{\boldsymbol\gamma} = \left\lbrace \mathbf{y} \in \mathbb{F}_{q^r}^\mu \mid H_{\boldsymbol\gamma} \mathbf{y}^T = \mathbf{0} \right\rbrace \subseteq \mathbb{F}_{q^r}^\mu .
\label{eq vector gamma as code}
\end{equation}
\end{lemma}

In conclusion, to construct $ \boldsymbol\gamma \in \mathbb{F}_{q^r}^\mu $, we may choose any $ \mathbb{F}_{q^r} $-linear code $ \mathcal{C}_{\boldsymbol\gamma} \subseteq \mathbb{F}_{q^r}^\mu $ of dimension $ \mu - \rho $ and minimum Hamming distance at least $ t + 1 $. As in related works \cite{gabrys, gopalan-MR, guruswami-PMDS}, the field size $ q^m = \left( q^r \right)^\rho $ has as exponent $ \rho $ the co-dimension of the code $ \mathcal{C}_{\boldsymbol\gamma} $. Therefore such a co-dimension needs to be as small as possible. 

\subsection{Using trivial codes: Recovering linearized RS codes} \label{subsec plugging a basis}

As a first choice of $ \mathcal{C}_{\boldsymbol\gamma} $, we choose a trivial code $ \mathcal{C}_{\boldsymbol\gamma} = \{ \mathbf{0} \} $ and recover duals of linearized Reed-Solomon codes \cite{linearizedRS}. This is the only case where the MSRD codes that we obtain are not new. This section is included as a remark. 

\begin{theorem} \label{th using trivial codes}
Choose $ \mu = \rho = 1 $. Thus $ m = r $, $ \mathcal{C}_{\boldsymbol\gamma} = \{ \mathbf{0} \} \subseteq \mathbb{F}_{q^r}^1 $ and
$$ (\beta_1, \beta_2, \ldots, \beta_r) = (\alpha_1, \alpha_2, \ldots, \alpha_r) \in \mathbb{F}_{q^r}^r. $$
In particular, $ g = q-1 $. Then the MSRD code $ \mathcal{C}_k(\mathbf{a}, \boldsymbol\beta) \subseteq \mathbb{F}_{q^r}^{g r} $ is the dual of a \textit{linearized Reed-Solomon code} \cite[Def. 31]{linearizedRS}. The co-dimension $ h $ is arbitrary with $ 1 \leq h \leq gr-1 $, and the field of linearity has size
\begin{equation}
|\mathbb{F}_{q^m}| = q^r .
\label{eq field size 0}
\end{equation}
\end{theorem} 
%

By \cite[Th. 4]{secure-multishot}, such duals are precisely linearized Reed-Solomon codes for $ \mathbf{a} \in (\mathbb{F}_{q^r}^*)^{q-1} $ as in Lemma \ref{lemma conjugacy classes less than q} (see also \cite[Prop. 38]{SR-BCH} for other cases and \cite{caruso-dual} in general). 

Notice that the PMDS codes from \cite{universal-lrc} are precisely those obtained from Construction \ref{construction 1} when taking the MSRD codes from Theorem \ref{th using trivial codes} as outer codes.

\subsection{Using MDS codes} \label{subsec plugging an MDS code}

In this subsection, we explore the case where $ \mathcal{C}_{\boldsymbol\gamma} $ is an MDS code.

\begin{theorem} \label{th using mds codes}
Choose $ \mu = q^r + 1 $, $ g = (q-1) \left( q^r + 1 \right) $ and $ \rho = t = \min \{ h, \mu \} $, being $ h $ arbitrary with $ 1 \leq h \leq gr-1 $. Choose $ \mathcal{C}_{\boldsymbol\gamma} \subseteq \mathbb{F}_{q^r}^\mu $ in (\ref{eq vector gamma as code}) as an MDS code of dimension $ \mu - t $, thus $ {\rm d}_H(\mathcal{C}_{\boldsymbol\gamma}) = t+1 $. For instance, $ \mathcal{C}_{\boldsymbol\gamma} $ can be chosen as the projective extension \cite[Th. 5.3.4]{pless} of a classical Reed-Solomon code \cite{reed-solomon}. Then the field of linearity of the MSRD code $ \mathcal{C}_k(\mathbf{a}, \boldsymbol\beta) \subseteq \mathbb{F}_{q^m}^{g r} $ has size 
\begin{equation}
|\mathbb{F}_{q^m}| = \left( \frac{g}{q-1} - 1 \right) ^{\min \left\lbrace h, \frac{g}{q-1} \right\rbrace } .
\label{eq field size 2 . 75}
\end{equation}
\end{theorem}
%

We now plug the MSRD codes from Theorem \ref{th using mds codes} into Construction \ref{construction 1}. The following corollary holds by Proposition \ref{prop construction 1 is pmds}. 

\begin{corollary}
In Construction \ref{construction 1}, choose $ \mathcal{C}_{out} = \mathcal{C}_k(\mathbf{a}, \boldsymbol\beta) \subseteq \mathbb{F}_{q^m}^{g r} $ to be the MSRD code in Theorem \ref{th using mds codes}, and let $ q $ be a power of $ 2 $ satisfying $ q > \nu $. Then $ \mathcal{C}_{glob} \subseteq \mathbb{F}_{q^m}^{g \nu} $ in Construction \ref{construction 1} is a PMDS code with $ (r,\delta) $-localities, and its field of linearity has size
\begin{equation}
|\mathbb{F}_{q^m}| = q^{r \min \{ h, \mu\}} \leq \max \left\lbrace \left( 2 \nu \right)^r , \left\lfloor \frac{g}{\nu} \right\rfloor - 1 \right\rbrace ^{\min \left\lbrace h,  \left\lfloor \frac{g}{\nu} \right\rfloor \right\rbrace } .
\label{eq field size 3}
\end{equation}
\end{corollary}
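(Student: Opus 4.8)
The plan is to separate the two assertions of the corollary: that $\mathcal{C}_{glob}$ is PMDS, and the explicit field-size bound (\ref{eq field size 3}). The first is essentially free. By Theorem \ref{th using mds codes} the chosen outer code $\mathcal{C}_{out} = \mathcal{C}_k(\mathbf{a}, \boldsymbol\beta) \subseteq \mathbb{F}_{q^m}^{gr}$ is MSRD for the length partition $(g,r)$ over $\mathbb{F}_q$, with $g = \ell\mu = (q-1)\mu \leq (q-1)(q^r+1)$. Since $q > \nu = r+\delta-1$, Remark \ref{remark restriction on q for MDS} guarantees that MDS local codes $\mathcal{C}_{loc}^{(i)} \subseteq \mathbb{F}_q^\nu$ of dimension $r$ (e.g.\ Reed--Solomon codes) exist over the base field $\mathbb{F}_q$, so Construction \ref{construction 1} applies verbatim, and Proposition \ref{prop construction 1 is pmds} immediately yields that $\mathcal{C}_{glob} \subseteq \mathbb{F}_{q^m}^{g\nu}$ is a PMDS code with $(r,\delta)$-localities.

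For the field size I would start from the exact value $|\mathbb{F}_{q^m}| = q^{r\min\{h,\mu\}} = (q^r)^{\min\{h,\mu\}}$ given by (\ref{eq field size 2 . 5}) in Theorem \ref{th using mds codes}, and reduce the target inequality (\ref{eq field size 3}) to two monotonicity comparisons: bounding the base $q^r$ from above by $B := \max\{(2\nu)^r, \lfloor g/\nu\rfloor - 1\}$, and bounding the exponent $\min\{h,\mu\}$ from above by $\min\{h, \lfloor g/\nu\rfloor\}$. Since $B \geq 1$, once both comparisons are established the claim follows from the elementary fact that $a^s \leq b^t$ whenever $1 \leq a \leq b$ and $0 \leq s \leq t$ (first raise $a$ to $b$ at fixed exponent, then enlarge the exponent at the fixed base $b \geq 1$).

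The exponent comparison holds uniformly in both prescribed choices of $q$. Using $\ell = q-1$, hence $g = (q-1)\mu$, together with $\nu \leq q-1$ (from $q > \nu$), one obtains $g/\nu = (q-1)\mu/\nu \geq \mu$, and since $\mu$ is an integer this gives $\lfloor g/\nu\rfloor \geq \mu$, so $\min\{h,\mu\} \leq \min\{h, \lfloor g/\nu\rfloor\}$. The base comparison is where the two cases diverge. In Case 2, where $q$ is the smallest power of $2$ exceeding $\nu$, we have $q/2 \leq \nu$, so $q \leq 2\nu$ and $q^r \leq (2\nu)^r \leq B$. In Case 1, where $\mu = q^r+1$ and thus $g = (q-1)(q^r+1)$, the same estimate gives $g/\nu \geq (q-1)(q^r+1)/(q-1) = q^r+1$, so $\lfloor g/\nu\rfloor \geq q^r+1$, i.e.\ $\lfloor g/\nu\rfloor - 1 \geq q^r$, and again $q^r \leq B$. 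Feeding the base and exponent comparisons into the monotonicity step yields (\ref{eq field size 3}).

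I do not expect a genuine obstacle: the PMDS part is a direct invocation of Proposition \ref{prop construction 1 is pmds}, and the field-size bound is a short manipulation of floors across the two alternative definitions of $q$. The only point requiring care is to track exactly where the hypotheses $\ell = q-1$, $\nu \leq q-1$, and each definition of $q$ are used, and to verify that $B \geq 1$ so that the exponent may be enlarged without reversing the inequality.
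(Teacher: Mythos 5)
Your proposal is correct and follows essentially the same route as the paper, which simply derives the corollary from Proposition \ref{prop construction 1 is pmds} and Theorem \ref{th using mds codes}; your case analysis (using $q \leq 2\nu$ when $q$ is the smallest power of $2$ exceeding $\nu$, and $\lfloor g/\nu\rfloor - 1 \geq q^r$ when $\mu = q^r+1$, together with $\lfloor g/\nu\rfloor \geq \mu$ from $\nu \leq q-1$) is exactly the elementary bookkeeping the paper leaves implicit. No gaps.
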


\subsection{Using Hamming codes} \label{subsec plugging a Hamming code}

We now investigate the case $ h = 2 $. As we show next, we obtain MSRD codes with arbitrary parameters except for $ h = 2 $ and with field sizes $ q^m $ that are linear in $ g $. In addition, such MSRD codes meet the bounds \cite[Th. 6.12]{alberto-fundamental}.
%
%

\begin{theorem} \label{th using hamming codes}
Consider $ h = 2 $ and $ 1 \leq \rho < \mu $, and choose $ \mathcal{C}_{\boldsymbol\gamma} \subseteq \mathbb{F}_{q^r}^\mu $ in (\ref{eq vector gamma as code}) as a $ (\mu - \rho) $-dimensional Hamming code. In other words, choose the vector $ \boldsymbol\gamma = (\gamma_1, \gamma_2, \ldots, \gamma_\mu) \in \left( \mathbb{F}_{q^{r \rho}}^* \right) ^\mu $ in (\ref{eq def vector gamma for tensor prod}) such that its components form the projective space $ \mathbb{P}_{\mathbb{F}_{q^r}}(\mathbb{F}_{q^{r \rho}}) = \{ [\gamma_1], [\gamma_2], $ $ \ldots, [\gamma_{\mu}] \} $, where $ [\gamma] = \{ \lambda \gamma \in \mathbb{F}_{q^{r \rho}}^* \mid \lambda \in \mathbb{F}_{q^r}^* \} $, for $ \gamma \in \mathbb{F}_{q^{r \rho}}^* $. 
The MSRD code $ \mathcal{C}_k(\mathbf{a}, \boldsymbol\beta) \subseteq \mathbb{F}_{q^m}^{g r} $ satisfies that $ {\rm d}_{SR}(\mathcal{C}_k(\mathbf{a}, \boldsymbol\beta)) = 3 $ and 
its field of linearity has size 
\begin{equation}
|\mathbb{F}_{q^m}| = \frac{q^r-1}{q - 1} \cdot g + 1 .
\label{eq field size 5 msrd}
\end{equation}
\end{theorem}
%
%

For $ r \geq 2 $, the linear MSRD codes from Theorem \ref{th using hamming codes} meet the following bound on field sizes, which follows from \cite[Th. 6.12]{alberto-fundamental}.

\begin{proposition} [\textbf{\cite{alberto-fundamental}}] \label{prop bounds alberto}
For positive integers $ m $, $ r $ and $ g $, let $ \mathcal{C} \subseteq \mathbb{F}_{q^m}^{gr} $ be a (linear or non-linear) MSRD code. If $ {\rm d}_{SR}(\mathcal{C}) = 3 $, $ r $ divides $ m $ and $ r \geq 2 $, then
\begin{equation}
g \leq (q-1) \cdot \frac{q^m - 1}{q^r - 1} , \quad \textrm{or} \quad q^m \geq \frac{q^r - 1}{q - 1} \cdot g + 1 .
\label{eq bound 2 alberto refined}
\end{equation}
\end{proposition}

We will not provide the corresponding construction of PMDS codes via Construction \ref{construction 1}, as there exist linear PMDS codes for $ h = 2 $ with smaller field sizes \cite{blaum-twoparities, neri-PMDS}.

\subsection{Using BCH codes} \label{subsec plugging a long BCH code}

In this subsection, we explore the case where $ \mathcal{C}_{\boldsymbol\gamma} \subseteq \mathbb{F}_{q^r}^\mu $ is a BCH code. We will set in this subsection $ \mu = q^{rs} - 1 $, for an arbitrary positive integer $ s $. 
Consider the code $ \mathcal{C}_{\boldsymbol\gamma} \subseteq \mathbb{F}_{q^r}^\mu $ in (\ref{eq vector gamma as code}) to be a \textit{BCH code}, see \cite[Sec. 7.6]{macwilliamsbook} \cite[Sec. 4.5 \& Ch. 5]{pless}. By the BCH bound \cite[Sec. 7.6, Th. 8]{macwilliamsbook} \cite[Th. 4.5.3]{pless}, we have that $ {\rm d}_H \left( \mathcal{C}_{\boldsymbol\gamma} \right) \geq \partial $ if the minimal generator polynomial of $ \mathcal{C}_{\boldsymbol\gamma} $ vanishes in $ 1, a, a^{2}, \ldots, a^{\partial - 2} \in \mathbb{F}_{q^{rs}} $, for an integer $ 2 \leq \partial \leq \mu +1 $, where $ a \in \mathbb{F}_{q^{rs}}^* $ is a primitive element. If we choose $ \mathcal{C}_{\boldsymbol\gamma} $ to be the largest BCH code whose minimal generator polynomial has such roots, then by \cite[Th. 4.2.1]{pless}, we have that
$$ \rho = \mu - \dim \left( \mathcal{C}_{\boldsymbol\gamma} \right) = | C_0 \cup C_{1} \cup C_{2} \cup \ldots \cup C_{\partial -2} |, $$
where $ C_i = \{ i, i q^{r}, i q^{2r}, i q^{3r}, \ldots \} $ modulo $ \mu $, see \cite[Sec. 7.5, p. 197]{macwilliamsbook} \cite[Sec. 4.1]{pless}, 
for $ i = 0,1, \ldots, \mu - 1 $. The integer $ \partial $ is called the \textit{prescribed distance} of the BCH code $ \mathcal{C}_{\boldsymbol\gamma} $, and the set $ C_0 \cup C_{1} \cup C_{2} \cup \ldots \cup C_{\partial -2} $ is called the \textit{defining set} of $ \mathcal{C}_{\boldsymbol\gamma} $.
%
%

Set $ \partial = t+1 = \min \{h,\mu \} + 1 $. To upper bound $ \rho $, we only need to upper bound the size of the defining set $ C_0 \cup C_{1} \cup C_{2} \cup \ldots \cup C_{t-1} $. It is trivial and well known that $ C_0 = \{ 0 \} $, $ |C_i| \leq s $ and $ C_{i q^r} = C_i $, for $ i = 0,1,2, \ldots, \mu - 1 $. Therefore, $ |C_0 | = 1 $ and we may remove from $ C_0 \cup C_1 \cup C_2 \cup \ldots \cup C_{t-1} $ each cyclotomic coset $ C_i $ where $ i $ is a multiple of $ q^r $. Hence
$$ | C_0 \cup C_1 \cup C_2 \cup \ldots \cup C_{t-1} | \leq 1 + s \cdot \left\lceil \frac{q^r - 1}{q^r} \cdot (t-1) \right\rceil . $$

Therefore, we have proven the following theorem.

\begin{theorem} \label{th using bch codes better}
Let $ s \in \mathbb{N}_+ $, $ \mu = q^{rs} - 1 $, $ g = (q-1) (q^{rs} - 1) $ and let $ h $ be arbitrary with $ 1 \leq h \leq gr - 1 $. Choose $ \mathcal{C}_{\boldsymbol\gamma} \subseteq \mathbb{F}_{q^r}^\mu $ in (\ref{eq vector gamma as code}) to be a BCH code, as above, with defining set $ C_0 \cup C_{1} \cup C_{2} \cup \ldots \cup C_{t-1} $. Then the field of linearity of the MSRD code $ \mathcal{C}_k(\mathbf{a}, \boldsymbol\beta) $ has size
\begin{equation}
|\mathbb{F}_{q^m}| = q^{r \rho} \leq q^r \cdot \left( \frac{g}{q-1} + 1 \right) ^{ \left\lceil \frac{q^r - 1}{q^r} (h -1) \right\rceil} .
\label{eq field size 8}
\end{equation}
\end{theorem}

We now plug the MSRD codes from Theorem \ref{th using bch codes better} into Construction \ref{construction 1}. The following corollary holds by Proposition \ref{prop construction 1 is pmds}. 

\begin{corollary}
In Construction \ref{construction 1}, choose $ \mathcal{C}_{out} = \mathcal{C}_k(\mathbf{a}, \boldsymbol\beta) \subseteq \mathbb{F}_{q^m}^{g r} $ to be the MSRD code in Theorem \ref{th using bch codes better}. We further assume that $ q $ is the smallest power of $ 2 $ such that $ q > \nu = r + \delta - 1 $. Then $ \mathcal{C}_{glob} \subseteq \mathbb{F}_{q^m}^{g \nu} $ in Construction \ref{construction 1} is a PMDS code with $ (r,\delta) $-localities, and its field of linearity has size
\begin{equation}
|\mathbb{F}_{q^m}| \leq \left( 2 \nu \right)^r \cdot \left( \left\lfloor \frac{g}{\nu} \right\rfloor + 1 \right) ^{ \left\lceil \frac{q^r - 1}{q^r} (h -1) \right\rceil} .
\label{eq field size 9}
\end{equation}
\end{corollary}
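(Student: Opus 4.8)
The plan is to read this statement as a direct specialization of the two results it cites, followed by a short sequence of elementary estimates that replace the parameters $q$ and $g/(q-1)$ by the quantities $2\nu$ and $\lfloor g/\nu\rfloor$ appearing on the right-hand side of (\ref{eq field size 9}). First I would dispatch the qualitative claim. By Theorem~\ref{th using bch codes better}, the chosen $\mathcal{C}_{out}=\mathcal{C}_k(\mathbf{a},\boldsymbol\beta)$ is an MSRD code for the length partition $(g,r)$ over $\mathbb{F}_q$; the hypothesis there that $q$ be coprime with $\mu=q^{rs}-1$ is automatic, since any prime dividing $q$ divides $q^{rs}$ and hence cannot divide $q^{rs}-1$. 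Because $q>\nu=r+\delta-1$, there exist $[\nu,r]$ MDS codes over $\mathbb{F}_q$ (e.g.\ Reed--Solomon codes, Remark~\ref{remark restriction on q for MDS}), so Construction~\ref{construction 1} applies with these as local codes, and Proposition~\ref{prop construction 1 is pmds} immediately yields that $\mathcal{C}_{glob}\subseteq\mathbb{F}_{q^m}^{g\nu}$ is a PMDS code with $(r,\delta)$-localities. This settles everything except the numerical bound on $q^{r\rho}=q^m=|\mathbb{F}_{q^m}|$.

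For the field size I would start from the estimate (\ref{eq field size 8}) of Theorem~\ref{th using bch codes better}, which under the present choice $\ell=q-1$ reads $q^{r\rho}\le q^r\bigl(\tfrac{g}{q-1}+1\bigr)^{E}$ with $E=\lceil\tfrac{q^r-1}{q^r}(h-1)\rceil\ge 0$, and where $\tfrac{g}{q-1}=\mu=q^{rs}-1$ is a positive integer. It then remains only to bound the base $q^r$ and the base $\tfrac{g}{q-1}+1$ from above by $(2\nu)^r$ and $\lfloor g/\nu\rfloor+1$ respectively, after which monotonicity of $x\mapsto x^{E}$ on the positive reals (valid since $E\ge 0$) closes the argument. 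Since $q$ is the smallest power of $2$ strictly exceeding $\nu$, its predecessor power $q/2$ satisfies $q/2\le\nu$, whence $q\le 2\nu$ and $q^r\le(2\nu)^r$. For the second base I would use that $q>\nu$ with both integers forces $q-1\ge\nu$, so $\tfrac{g}{q-1}\le\tfrac{g}{\nu}$; because $\tfrac{g}{q-1}=\mu$ is an integer, this gives $\mu\le\lfloor g/\nu\rfloor$ and hence $\tfrac{g}{q-1}+1=\mu+1\le\lfloor g/\nu\rfloor+1$. Chaining these two inequalities through (\ref{eq field size 8}) produces exactly (\ref{eq field size 9}).

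There is no deep obstacle here; the content is entirely in organizing the specialization cleanly. The only point that requires a moment of care is the passage from $g/(q-1)$ to $\lfloor g/\nu\rfloor$: one must use both that $q-1\ge\nu$ and that $g/(q-1)$ happens to be an integer under the choice $g=(q-1)\mu$, since otherwise one could only conclude $g/(q-1)\le g/\nu$ rather than the sharper floor inequality. The strict inequality $q>\nu$ (as opposed to $q\ge\nu$) is precisely what is needed to guarantee $q-1\ge\nu$, and is also what secures the existence of the MDS local codes; thus the two hypotheses on $q$ feed both halves of the proof.
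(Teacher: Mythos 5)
Your proposal is correct and follows exactly the route the paper intends: the paper simply states that the corollary ``holds by Proposition~\ref{prop construction 1 is pmds} and Theorem~\ref{th using bch codes better}'', and your argument fills in precisely the routine details behind that citation --- the automatic coprimality of $q$ and $q^{rs}-1$, the existence of the MDS local codes from $q>\nu$, and the elementary estimates $q\le 2\nu$ and $g/(q-1)=\mu\le\lfloor g/\nu\rfloor$ chained through (\ref{eq field size 8}). No gaps; your remark about needing the integrality of $g/(q-1)$ to pass to the floor is exactly the right point of care.
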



\subsection{Using Algebraic-Geometry (AG) codes} \label{subsec plugging an Algebraic-Geometry code}

In this subsection, we explore the case where $ \mathcal{C}_{\boldsymbol\gamma} \subseteq \mathbb{F}_{q^r}^\mu $ is an Algebraic-Geometry code, or AG code for short. AG codes have been proposed to construct PMDS codes in \cite{neri-PMDS, guruswami-PMDS}, but they have not yet been used to construct MSRD codes. 

We will only need to describe the parameters of the considered AG codes. 
%
For further details, the reader is referred to \cite{stichtenothbook}. Consider an \textit{irreducible projective curve} $ \mathcal{X} $ over $ \mathbb{F}_{q^r} $ with \textit{algebraic function field} $ \mathcal{F} $, and let $ \mathfrak{g} = \mathfrak{g}(\mathcal{X}) = \mathfrak{g}(\mathcal{F}) $ be its \textit{genus}. Points in the curve $ \mathcal{X} $ with coordinates over $ \mathbb{F}_{q^r} $ will be called \textit{rational}.
We will mainly use the following lemma on the parameters of AG codes, which follows from \cite[Cor. 2.2.3]{stichtenothbook}.

\begin{lemma} [\textbf{\cite{stichtenothbook}}] \label{lemma goppa}
Let $ \mu \geq 2 \mathfrak{g} $ such that the number of rational points of $ \mathcal{X} $ is at least $ \mu + 1 $. For any $ \mathfrak{g} \leq k \leq \mu - \mathfrak{g} $, we may construct a $ k $-dimensional linear AG code $ \mathcal{C} \subseteq \mathbb{F}_{q^r}^\mu $ from the curve $ \mathcal{X} $ such that $ {\rm d}_H(\mathcal{C}) \geq \mu - k - \mathfrak{g} + 1 $.
\end{lemma} 
%
%
%
From this lemma, we may deduce the following theorem.

\begin{theorem} \label{th using ag codes}
Assume that $ \mu - h \geq 2 \mathfrak{g} $, in particular $ t = \min \{ h,\mu \} = h $, and assume that the number of rational points of $ \mathcal{X} $ is at least $ \mu + 1 $. Define $ \rho = h + \mathfrak{g} $, which satisfies that $ \mathfrak{g} \leq \mu - \rho \leq \mu - \mathfrak{g} $. Choose the code $ \mathcal{C}_{\boldsymbol\gamma} \subseteq \mathbb{F}_{q^r}^\mu $ in (\ref{eq vector gamma as code}) to be the $ (\mu-\rho) $-dimensional linear AG code from Lemma \ref{lemma goppa}. Then the field of linearity of the MSRD code $ \mathcal{C}_k(\mathbf{a}, \boldsymbol\beta) \subseteq \mathbb{F}_{q^m}^{g r} $ has size
\begin{equation}
|\mathbb{F}_{q^m}| = q^{r \rho} = \left( q^r \right)^{h + \mathfrak{g}} .
\label{eq field size 12}
\end{equation}
\end{theorem}
%

We next particularize Theorem \ref{th using ag codes} to three concrete choices of the curve $ \mathcal{X} $ to construct linear MSRD codes with smaller field sizes.

\subsection{Using Hermitian AG codes} \label{subsec plugging an Algebraic-Geometry code Hermitian}

We start by exploring \textit{Hermitian curves} $ \mathcal{X} $ (see \cite[Sec. 8.3]{stichtenothbook}). Fix a positive integer $ s $ such that $ q^r = p^{2s} $, where $ p $ is prime. The Hermitian curve is the projective plane curve with homogeneous equation
$$ x^{q^{ \frac{r}{2} } + 1} - y^{q^{ \frac{r}{2} }} z - y z^{q^{ \frac{r}{2} }} = 0. $$
This curve has $ q^{ \frac{3r}{2} }+1 $ rational points and genus 
$$ \mathfrak{g} = \frac{q^{ \frac{r}{2} } \left( q^{ \frac{r}{2} }-1 \right) }{2} . $$
Therefore, we may choose $ \mu = q^{3r/2} $ in Theorem \ref{th using ag codes}, and we deduce the following.

\begin{corollary} \label{cor using hermitian codes}
Let the notation and assumptions be as in Theorem \ref{th using ag codes}, but where $ \mathcal{X} $ is the Hermitian curve above, and where $ \mu = q^{ 3r/2 } $. 
Then the field of linearity of the MSRD code $ \mathcal{C}_k(\mathbf{a}, \boldsymbol\beta) \subseteq \mathbb{F}_{q^m}^{g r} $ has size
\begin{equation}
|\mathbb{F}_{q^m}| = \left( q^r \right)^{h + \mathfrak{g}} = \mu^{ \frac{2}{3} (h + \mathfrak{g}) } = \mu^{ \frac{1}{3} ( 2h + \mu^{2/3} - \mu^{1/3} ) } ,
\label{eq field size 13}
\end{equation}
that is, $ m = r \left( h + \frac{1}{2} \left( \mu^{ \frac{2}{3} } - \mu^{ \frac{1}{3} } \right) \right) $, where $ \mu = \frac{g}{q-1} $.
\end{corollary}
%
%

\subsection{Using Suzuki AG codes} \label{subsec plugging an Algebraic-Geometry code Suzuki}

In this subsection, we explore \textit{Suzuki} curves $ \mathcal{X} $ (see \cite{suzuki}). Let $ r $ and $ s $ be positive integers such that $ r $ divides $ 2s + 1 $, and consider $ q^r = 2^{2s + 1} $. The Suzuki curve is the projective plane curve with homogeneous equation
$$ x^{2^s} \left( y^{q^r} + y x^{q^r - 1} \right) = z^{2^s} \left( z^{q^r} + z x^{q^r - 1} \right) . $$
This curve has $ q^{2r} + 1 $ rational points over $ \mathbb{F}_{q^r} $ by \cite[Prop. 2.1]{suzuki}, and genus $ \mathfrak{g} = 2^s \left( q^r - 1 \right) $ by \cite[Lemma 1.9]{suzuki}. Therefore, we may choose $ \mu = q^{2r} $ in Theorem \ref{th using ag codes}, hence 
$$ \mathfrak{g} = 2^s \left( \mu^{ \frac{1}{2} } - 1 \right) \leq \mu^{ \frac{1}{4} } \left( \mu^{ \frac{1}{2} } - 1 \right) = \mu^{ \frac{3}{4} } - \mu^{ \frac{1}{4} } , $$
and we deduce the following consequence.

\begin{corollary} \label{cor using suzuki codes}
Let the notation and assumptions be as in Theorem \ref{th using ag codes}, but where $ \mathcal{X} $ is the Suzuki curve above, and where $ \mu = q^{ 2r } $. 
Then the field of linearity of the MSRD code $ \mathcal{C}_k(\mathbf{a}, \boldsymbol\beta) \subseteq \mathbb{F}_{q^m}^{g r} $ has size
\begin{equation}
|\mathbb{F}_{q^m}| = \left( q^r \right)^{h + \mathfrak{g}} = \mu^{ \frac{1}{2} (h + \mathfrak{g}) } \leq \mu^{ \frac{1}{2} \left( h + \mu^{ \frac{3}{4} } - \mu^{ \frac{1}{4} } \right) } ,
\label{eq field size 15}
\end{equation}
that is, $ m = r \left( h + 2^s \left( \mu^{ \frac{1}{2} } - 1 \right) \right) \leq r \left( h + \mu^{ \frac{3}{4} } - \mu^{ \frac{1}{4} } \right) $, where $ \mu = \frac{g}{q-1} $.
\end{corollary}
%

\subsection{Using Garc{\'i}a-Stichtenoth's AG codes} \label{subsec plugging an Algebraic-Geometry code Garcia-Sticht} 

In this subsection, we explore the second sequence of curves $ (\mathcal{X}_i)_{i=1}^\infty $ given by Garc{\'i}a and Stichtenoth (see \cite{garciatower2} or \cite[Sec. 7.4]{stichtenothbook}). Fix a positive integer $ s $ such that $ q^r = p^{2s} $, where $ p $ is prime. For these curves, it is more convenient to define recursively the associated sequence of algebraic function fields $ (\mathcal{F}_i)_{i=1}^\infty $. First we define $ \mathcal{F}_1 = \mathbb{F}_{q^r}(x_1) $, where $ x_1 $ is transcendental over $ \mathbb{F}_{q^r} $, and then we define recursively $ \mathcal{F}_{i+1} = \mathcal{F}_i(x_{i+1}) $, where $ x_{i+1} $ is algebraic over $ \mathcal{F}_i $ satisfying the equation
$$ x_{i+1}^{q^{ \frac{r}{2} }} + x_{i+1} = \frac{x_i^{q^{ \frac{r}{2} }}}{x_i^{q^{ \frac{r}{2} } - 1} + 1 } , $$
for all $ i \in \mathbb{N}_+ $. The $ i $th curve $ \mathcal{X}_i $ has $ q^{ \frac{ir}{2} } \left( q^{ \frac{r}{2} } -1 \right) + 1 $ rational points, and its genus is
\begin{displaymath}
\mathfrak{g} (\mathcal{X}_i) = \left\lbrace 
\begin{array}{ll}
\left( q^{\frac{i r}{4}} - 1 \right) ^2 & \textrm{if } i \textrm{ is even}, \\
\left( q^{\frac{(i+1) r}{4}} - 1 \right) \cdot \left( q^{\frac{(i-1)r}{4}} - 1 \right) & \textrm{if } i \textrm{ is odd}, \\
\end{array} \right.
\end{displaymath}
by \cite[Remark 3.8]{garciatower2}. 
We will choose $ \mu_i = q^{ \frac{ir}{2} } \left( q^{ \frac{r}{2} } -1 \right) $ in Theorem \ref{th using ag codes}, hence 
$$ \mathfrak{g}_i = \mathfrak{g}(\mathcal{X}_i) \leq q^{ \frac{ir}{2} } = \frac{\mu_i}{ q^{ \frac{r}{2} } - 1 } , $$
for all $ i \in \mathbb{N}_+ $, and we deduce the following consequence.

\begin{corollary} \label{cor using garcia codes}
Let the notation and assumptions be as in Theorem \ref{th using ag codes}, but where $ \mathcal{X}_i $ and $ \mu_i $ are as above, for $ i \in \mathbb{N}_+ $. 
Then, for $ i \in \mathbb{N}_+ $, the field of linearity of the MSRD code $ \mathcal{C}_{k_i}(\mathbf{a}_i, \boldsymbol\beta_i) \subseteq \mathbb{F}_{q^{m_i}}^{g_i r} $ has size
\begin{equation}
|\mathbb{F}_{q^{m_i}}| = \left( q^r \right)^{h_i + \mathfrak{g}_i} \leq \left( q^r \right)^{ h_i + q^{ \frac{ir}{2} } } = \left( \frac{\mu_i}{ q^{ \frac{r}{2} } - 1 } \right)^{ \frac{2}{i} \left( h_i + \frac{\mu_i}{ q^{ \frac{r}{2} } - 1 } \right) } .
\label{eq field size 17}
\end{equation}
that is, $ m_i \leq r \left( h_i + q^{ \frac{ir}{2} } \right) = r \left( h_i + \frac{\mu_i}{ q^{ \frac{r}{2} } - 1 } \right) $, where $ \mu_i = \frac{g_i}{q-1} $.
\end{corollary}
%

\section{Summary of results and comparisons} \label{sec summary and final considerations}

In this final section, we will summarize the parameters of the MSRD codes and PMDS codes obtained throughout this work, and compare them to those from the literature.

The parameters of the MSRD codes obtained in Subsections \ref{subsec plugging a basis}, \ref{subsec plugging an MDS code}, \ref{subsec plugging a Hamming code}, \ref{subsec plugging a long BCH code}, \ref{subsec plugging an Algebraic-Geometry code Hermitian}, \ref{subsec plugging an Algebraic-Geometry code Suzuki}, and \ref{subsec plugging an Algebraic-Geometry code Garcia-Sticht} are summarized in Table \ref{table comparisons MSRD}. The parameters of the PMDS codes obtained in Subsections \ref{subsec plugging a basis}, \ref{subsec plugging an MDS code}, and \ref{subsec plugging a long BCH code} are summarized in Table \ref{table comparisons PMDS}.

\subsection{Comparison among MSRD codes}

\begin{table} [!h]
\centering
\begin{tabular}{c||c|c|c}
\hline
&&\\[-0.8em]
Code $ \mathcal{C}_{\boldsymbol\gamma} $ & $ q $, $ r $, $ h $ & No. matrix sets $ g $ & Field of linearity $ q^m $ \\[0.3em]
\hline\hline
&&\\[-0.8em]
Trivial $ \mathcal{C}_{\boldsymbol\gamma} = \{ 0 \} $ & Any & $ q-1 $ & $ q^r = (g+1)^r $, $ m = r $ \\[0.3em]
\hline 
&&\\[-0.8em]
MDS & Any & $ (q-1) \left( q^r + 1 \right) $ & $ \left( \frac{g}{q-1} -1 \right)^{\min \left\lbrace h, \frac{g}{q-1} \right\rbrace } $ \\[0.3em]
\hline 
&&\\[-0.8em]
Hamming, $ \rho \in \mathbb{N}_+ $ & $ h = 2 $ & $ (q-1) \cdot \frac{q^{r \rho} - 1}{q^r - 1} $ & $ q^{r \rho} = \frac{q^r - 1}{q-1} \cdot g + 1 $ \\[0.3em]
\hline 
&&\\[-0.8em]
Pr. BCH, $ s \in \mathbb{N}_+ $ & Any & $ (q-1) \left( q^{rs} - 1 \right) $ & $ \leq q^r \cdot \left( \frac{g}{q-1} + 1 \right) ^{ \left\lceil \frac{q^r - 1}{q^r} (h -1) \right\rceil} $ \\[0.3em]
\hline 
&&\\[-0.8em]
Hermitian AG & $ q^r = p^{2s} $ & $ (q-1) q^{ \frac{3r}{2} } $ & $ \mu^{ \frac{1}{3} ( 2h + \mu^{2/3} - \mu^{1/3} ) } $, $ \mu = \frac{g}{q-1} $ \\[0.3em]
\hline 
&&\\[-0.8em]
Suzuki AG & $ q^r = 2^{2s+1} $ & $ (q-1) q^{2r} $ & $ \leq \mu^{ \frac{1}{2} \left( h + \mu^{3/4} - \mu^{1/4} \right) } $, $ \mu = \frac{g}{q-1} $ \\[0.3em]
\hline 
&&\\[-0.8em]
AG \cite{garciatower2}, $ i \in \mathbb{N}_+ $ & $ q^r = p^{2s} $ & $ (q-1) \left( q^{ \frac{r}{2} } -1 \right) q^{ \frac{ir}{2} } $ & $ \leq \left( \frac{\mu_i}{ q^{ \frac{r}{2} } - 1 } \right)^{ \frac{2}{i} \left( h_i + \frac{\mu_i}{ q^{ \frac{r}{2} } - 1 } \right) } $, $ \mu_i = \frac{g_i}{q-1} $ \\[0.3em]
\hline 
\end{tabular}
\caption{Table summarizing the code parameters of the linear MSRD codes obtained in this work throughout Subsections \ref{subsec plugging a basis}, \ref{subsec plugging an MDS code}, \ref{subsec plugging a Hamming code}, \ref{subsec plugging a long BCH code}, \ref{subsec plugging an Algebraic-Geometry code Hermitian}, \ref{subsec plugging an Algebraic-Geometry code Suzuki}, and \ref{subsec plugging an Algebraic-Geometry code Garcia-Sticht}. They are $ \mathbb{F}_{q^m} $-linear codes in $ \mathbb{F}_{q^m}^{gr} $ with code length $ N = gr $, dimension $ k = gr - h $, and minimum sum-rank distance $ d = h+1 $. Their codewords can be seen as lists of $ g $ matrices over $ \mathbb{F}_q $ of size $ m \times r $, where $ m = r \rho $, $ \rho \in \mathbb{N}_+ $. The linear MSRD code in the first row was obtained in \cite{linearizedRS}, and later independently in \cite{caruso} and \cite{tovo-MSRD}. The codes in the other six rows are new. The symbol $ \leq $ in the last column implies that the given expression is an upper bound on the smallest field size $ q^m $ possible for the corresponding codes. }
\label{table comparisons MSRD}
\end{table}

We start by discussing MSRD codes. First of all, smaller values of $ g $ and $ r $ in Table \ref{table comparisons MSRD} may be obtained in each row, while keeping the same values of $ q $, $ m $ and $ h $, by puncturing or shortening the corresponding MSRD codes, see \cite[Cor. 7]{gsrws}.

As discussed in the Introduction, any MRD code \cite{delsartebilinear, gabidulin, roth} is an MSRD code, however, their fields of linearity have size $ q^m \geq q^{gr} \geq 2^{gr} $, thus exponential in the code length $ N = gr $. All of the field sizes $ q^m $ in Table \ref{table comparisons MSRD} are much smaller than $ 2^{gr} $. 

The first known construction of MSRD codes with sub-exponential field sizes is that of linearized Reed-Solomon codes, introduced in \cite{linearizedRS}, and later independently in \cite{caruso, tovo-MSRD}. They correspond to the first row in Table \ref{table comparisons MSRD}. The other six rows in Table \ref{table comparisons MSRD} correspond to new MSRD codes, all of which attain larger values of $ g $ (relative to the other parameters) than linearized Reed-Solomon codes, which require $ g < q $. We next show that each of these six MSRD codes attains the smallest field sizes $ q^m $ for some parameter regime.

The field sizes $ q^m $ of the MSRD codes in row 3 are the smallest among all MSRD codes relative to $ q $, $ r $ and $ g $, in the sense that they meet the bound (\ref{eq bound 2 alberto refined}). However, these codes are restricted to $ h = 2 $ (dimension or co-dimension $ 2 $). 

The MSRD codes in rows 2 and 4 in Table \ref{table comparisons MSRD} attain smaller field sizes $ q^m $ than linearized Reed-Solomon codes for small $ h $ relative to $ r $, by looking at the exponents (the base is roughly $ g $ in rows 1, 2 and 4). Consider the parameter regime $ q-1 < g \leq (q-1) (q^r-1) $ (i.e., setting $ s=1 $ in row 4), which are unattainable values of $ g $ for linearized Reed-Solomon codes. In this regime, the MSRD codes in row 2 attain smaller field sizes $ q^m $ than those in row 4 if 
$$ h \geq \frac{(q^r + 1)^2}{q^r-1} = \Omega(q^r) $$
(for code length $ N = gr = (q-1)(q^r-1)r = \Theta(q^{r+1}r) $, the relative co-dimension is thus required to be $ h/N = \Omega (q^{-1}r^{-1}) $). When $ h < q^r $ (thus relative co-dimension $ h/N < (q-1)^{-1}r^{-1} $), the MSRD codes in row 4 attain smaller field sizes than those in row 2. Notice that in this parameter regime, i.e., $ q-1 < g \leq (q-1) (q^r-1) $, the MSRD codes in rows 5, 6 and 7 require larger field sizes than those in rows 1, 2, 3 or 4.

Now consider the parameter regime $ g > (q-1)(q^r+1) $ and $ h > 2 $, thus we may only consider the codes in the last four rows of Table \ref{table comparisons MSRD}. Consider first the MSRD codes in row 5, assume $ h < \mu $ and let $ \varepsilon > 0 $ be such that $ \varepsilon h > \mu^{2/3} - \mu^{1/3} $. Then the field sizes in row 5 are $ q^m = \mu^{h (2+\varepsilon)/3} $. Since only $ \varepsilon \mu > \varepsilon h > \mu^{2/3} - \mu^{1/3} $ is required, we may consider $ \varepsilon > 0 $ to be as small as desired for large enough $ \mu $ and $ h $. In that case, the field sizes in row 5 are $ \mu^{\mathcal{O}(2h/3)} $, which are asymptotically smaller than the field sizes $ \mu^{\Theta(h)} $ from row 4. Consider now the MSRD codes in row 6, assume $ h < \mu $ and let $ \varepsilon > 0 $ be such that $ \varepsilon h > \mu^{3/4} - \mu^{1/4} $. Again, $ \varepsilon > 0 $ may be as small as desired for large enough $ h $ and $ \mu $. Then, the field sizes in row 6 are $ \mu^{\mathcal{O}(h/2)} $, which are asymptotically smaller than the field sizes $ \mu^{\mathcal{O}(2h/3)} $ from row 5, however with stronger restrictions, as we require $ \varepsilon h > \mu^{3/4} - \mu^{1/4} $ instead of $ \varepsilon h > \mu^{2/3} - \mu^{1/3} $. Finally, consider the MSRD codes in row 7 and assume that $ \mu/(q^{r/2}-1) < h < \mu $. Then the field sizes in row 7 are $ \mu^{\mathcal{O}(h/i)} $, where $ i \in \mathbb{N}_+ $ may be arbitrary. Such field sizes are asymptotically smaller than the field sizes from rows 5 and 6, however with stronger restrictions, as we require $ h > \mu/(q^{r/2}-1) $ instead of $ \varepsilon h > \mu^{2/3} - \mu^{1/3} $ or $ \varepsilon h > \mu^{3/4} - \mu^{1/4} $. 
%

Finally, we mention two recent works where new MSRD codes are constructed. Twisted linearized Reed-Solomon codes have been introduced in \cite{neri-twisted}
. However, their field sizes are not smaller than those of linearized Reed-Solomon codes (row 1 in Table \ref{table comparisons MSRD}). Some constructions of MSRD codes were recently given in \cite{alberto-fundamental}. However, such codes are only $ \mathbb{F}_q $-linear, and have minimum sum-rank distance equal to $ 2 $ or $ \sum_{i=1}^g r_i - 1 $ (total number of columns, across all matrices, minus $ 1 $), or require the number of rows or columns to be $ 1 $ at certain positions in the matrices in (\ref{eq codeword as list of matrices}). 
%

\subsection{Comparison among PMDS codes}

\begin{table} [!h]
\centering
\begin{tabular}{c||c|c}
\hline
&\\[-0.8em]
Code $ \mathcal{C}_{\boldsymbol\gamma} $ & Restrictions on $ r $, $ \delta $, $ g $, $ h $, $ q $ & Field size $ q^m $ \\[0.3em]
\hline\hline
&\\[-0.8em]
Trivial $ \mathcal{C}_{\boldsymbol\gamma} = \{ 0 \} $ & $ \max \{ \nu, g \} < q \leq 2 \max \{ \nu, g \} $ & $ \leq (2 \max \{ \nu , g \})^r $ \\[0.3em]
\hline 
&\\[-0.8em]
MDS & $ g = (q-1) \left( q^r + 1 \right) $ or $ (2 \nu)^r > \frac{g}{\nu} $ & $ \leq \max \left\lbrace \left( 2 \nu \right)^r , \left\lfloor \frac{g}{\nu} \right\rfloor - 1 \right\rbrace ^{\min \left\lbrace h,  \left\lfloor \frac{g}{\nu} \right\rfloor \right\rbrace } $ \\[0.3em]
\hline 
&\\[-0.8em]
Primitive BCH & $ g = (q-1) \left( q^{rs} - 1 \right) $ and $ q > \nu $ & $ \leq (2 \nu)^r \cdot \left( \left\lfloor \frac{g}{\nu} \right\rfloor + 1 \right) ^{ h -1} $ \\[0.3em]
\hline  
\end{tabular}
\caption{Table summarizing the code parameters of the linear PMDS codes obtained in this work throughout Subsections \ref{subsec plugging a basis}, \ref{subsec plugging an MDS code}, and \ref{subsec plugging a long BCH code}. They are $ \mathbb{F}_{q^m} $-linear codes in $ \mathbb{F}_{q^m}^{g \nu} $, where $ r $ is the locality, $ \delta $ is the local distance, $ g $ is the number of local sets, $ h $ is the number of global parities, $ \nu = r+\delta -1 $ is the local-set size and $ q $ is a power of $ 2 $. The field size of the local codes is a subfield of $ \mathbb{F}_q $. The linear PMDS code in the first row was obtained in \cite{universal-lrc}. The codes in the other two rows are new. The symbol $ \leq $ in the last column implies that the given expression is an upper bound on the smallest field size $ q^m $ possible for the corresponding codes. }
\label{table comparisons PMDS}
\end{table}

We now turn to discussing PMDS codes. For $ h \in \{ 0,1,2,3 \} $, the PMDS codes from \cite{blaum-RAID, gopi} have smaller field sizes than the PMDS codes in Table \ref{table comparisons PMDS}. For dimension $ r+1 $, the PMDS codes from \cite{zitan} also have smaller field sizes than those in Table \ref{table comparisons PMDS}.

To the best of our knowledge, the PMDS codes for general parameters with the smallest known field sizes are those in \cite{cai-field, gopi-field, universal-lrc}, being those from \cite{universal-lrc} exactly the PMDS codes in row 1 in Table \ref{table comparisons PMDS}. The constructions from \cite{cai-field, gopi-field, universal-lrc}, put together, yield PMDS codes with field sizes $ q^m $ such that
\begin{equation}
\max \{ \nu, g \}^{\min \{ r,h \}} < q^m \leq (2 \max \{ \nu, g \})^{\min \{ r,h \}}.
\label{eq field sizes best}
\end{equation}
If $ g > (2\nu)^r \nu $, then the field sizes of the PMDS codes in row 2 in Table \ref{table comparisons PMDS} would be
$$ q^m \leq \left( \left\lfloor \frac{g}{\nu} \right\rfloor - 1 \right)^h, $$
which would be strictly smaller than those in row 3 in Table \ref{table comparisons PMDS} and in (\ref{eq field sizes best}) when $ h < r $. 

Asymptotically, if $ \nu = \mathcal{O}(1) $ and $ g > \nu $ grows unboundedly, the field sizes of the PMDS codes in row 3 in Table \ref{table comparisons PMDS} are
$$ q^m = \mathcal{O} \left( \left( \frac{g}{\nu} \right)^{h-1} \right), $$
which would be asymptotically smaller than those of the PMDS codes in row 2 in Table \ref{table comparisons PMDS} and those in (\ref{eq field sizes best}) when $ h < r $.

{\footnotesize

\bibliographystyle{plain}

}

\newpage

\appendix

\section*{Appendix: Tables with even field sizes for MSRD codes}

In this appendix, we provide several tables of attainable field sizes $ q^m $, divisible by $ 2 $, among the linear MSRD codes obtained in this work.

First, we give a summary in Table \ref{table fields MSRD q = general}, which is similar to Table \ref{table comparisons MSRD}, but where the field size $ q^m $ is not compared to $ g $, but written as a function of $ q $, $ r $ and $ h $, excluding $ g $. The reason behind this is that typically the maximum attainable value of $ g $ is quite large for most of these codes, and in most cases we would puncture them in order to have a much smaller number of matrix sets $ g $. 

In Tables \ref{table fields MSRD g >= 7}, \ref{table fields MSRD g >= 15} and \ref{table fields MSRD g >= 31}, we fix $ g $ and let other parameters vary. In contrast, in Tables \ref{table fields MSRD N >= 30} and \ref{table fields MSRD N >= 62}, we fix the code length $ N = gr $ and let other parameters vary. In these tables, bold numbers indicate field sizes that are the smallest among MSRD codes of the same parameters. As linearized Reed-Solomon codes have the same field sizes for all $ h $, a bold number in that row means that the field size is the smallest for the corresponding parameters for some $ h $.

The field sizes attained by linear MSRD codes based on AG codes (Subsections \ref{subsec plugging an Algebraic-Geometry code Hermitian}, \ref{subsec plugging an Algebraic-Geometry code Suzuki} and \ref{subsec plugging an Algebraic-Geometry code Garcia-Sticht}) are quite larger than those obtained by the other linear MSRD codes for small parameters. In general, MSRD codes based on AG codes (as in Subsection \ref{subsec plugging an Algebraic-Geometry code}) are mostly of asymptotic interest. For this reason, they are not included in Tables \ref{table fields MSRD q = 2}, \ref{table fields MSRD g >= 7}, \ref{table fields MSRD g >= 15}, \ref{table fields MSRD g >= 31}, \ref{table fields MSRD N >= 30} and \ref{table fields MSRD N >= 62}.
%

Finally, at the end of each table we consider the smallest field size attainable by an MRD code for the corresponding parameters. As it can be seen, MRD codes always require significant larger field sizes than the MSRD codes from this work, for the same parameters.

\begin{table} [h!]
\centering
\begin{tabular}{c||c|c|c}
\hline
&&\\[-0.8em]
Code $ \mathcal{C}_{\boldsymbol\gamma} $ & $ q $, $ r $, $ h $ & No. matrix sets $ g $ & Field of linearity $ q^m $ \\[0.3em]
\hline\hline
&&\\[-0.8em]
Trivial $ \mathcal{C}_{\boldsymbol\gamma} = \{ 0 \} $ (Lin. RS) & Any & $ q-1 $ & $ q^r $ \\[0.3em]
\hline 
&&\\[-0.8em]
MDS & Any & $ (q-1) \left( q^r + 1 \right) $ & $ q^{r \min \left\lbrace h, q^r + 1 \right\rbrace } $ \\[0.3em]
\hline 
&&\\[-0.8em]
Hamming, $ \rho \in \mathbb{N}_+ $ & $ h = 2 $ & $ (q-1) \cdot \frac{q^{r \rho} - 1}{q^r - 1} $ & $ q^{r \rho} $ \\[0.3em]
\hline 
&&\\[-0.8em]
Pr. BCH, $ s \in \mathbb{N}_+ $ & Any & $ (q-1) \left( q^{rs} - 1 \right) $ & $ \leq q^{r \left( 1 + s \left\lceil \frac{q^r - 1}{q^r} (h-1) \right\rceil \right) } $ \\[0.3em]
\hline 
&&\\[-0.8em]
Hermitian AG & $ q^r = p^{2s} $ & $ (q-1) q^{ \frac{3r}{2} } $ & $ q^{r \left( h + \frac{1}{2} \left( q^r - q^{ \frac{r}{2} } \right) \right) } $ \\[0.3em]
\hline 
&&\\[-0.8em]
Suzuki AG & $ q^r = 2^{2s+1} $ & $ (q-1) q^{2r} $ & $ q^{r \left( h + 2^s \left( q^r - 1 \right) \right) } $ \\[0.3em]
\hline 
&&\\[-0.8em]
AG \cite{garciatower2}, $ i \in \mathbb{N}_+ $ & $ q^r = p^{2s} $ & $ (q-1) \left( q^{ \frac{r}{2} } -1 \right) q^{ \frac{ir}{2} } $ & $ \leq q^{r \left( h_i + q^{ \frac{ir}{2} } \right) } $ \\[0.3em]
\hline 
\end{tabular}
\caption{Table summarizing the code parameters of the linear MSRD codes obtained in this work. In contrast with Table \ref{table comparisons MSRD}, field sizes are described in terms of $ q $, $ r $ and $ h $, excluding $ g $. }
\label{table fields MSRD q = general}
\end{table}

\begin{table} [t]
\centering
\begin{tabular}{c||cc|cc|cc|cc|cc}
\hline
&&&&&&&&\\[-0.8em]
 Table for $ q = 2 $ & \multicolumn{2}{|c|}{$ r = 2 $} & \multicolumn{2}{|c|}{$ r = 3 $} & \multicolumn{2}{|c|}{$ r = 4 $} & \multicolumn{2}{|c|}{$ r = 5 $} & \multicolumn{2}{c}{$ r = 6 $} \\[0.3em]
&&&&&&&&\\[-0.8em]
Code $ \mathcal{C}_{\boldsymbol\gamma} $ & $ 2^m $ & $ g $ & $ 2^m $ & $ g $ & $ 2^m $ & $ g $ & $ 2^m $ & $ g $ & $ 2^m $ & $ g $ \\[0.3em]
\hline\hline
&&&&&&&&\\[-0.8em]
Trivial $ \mathcal{C}_{\boldsymbol\gamma} = \{ 0 \} $ (Lin. RS) & $ 2^2 $ & $ 1 $ & $ 2^3 $ & $ 1 $ & $ 2^4 $ & $ 1 $ & $ 2^5 $ & $ 1 $ & $ 2^6 $ & $ 1 $ \\[0.3em]
\hline 
&&&&&&&&\\[-0.8em]
MDS, $ h = 2 $ & $ 2^4 $ & \multirow{3}{*}{$ 5 $} & $ 2^6 $ & \multirow{3}{*}{$ 9 $} & $ 2^8 $ & \multirow{3}{*}{$ 17 $} & $ 2^{10} $ &  \multirow{3}{*}{$ 33 $} & $ 2^{12} $ &  \multirow{3}{*}{$ 65 $} \\[0.3em]

&&&&&&&&\\[-0.8em]
\phantom{MDS,} $ h = 3 $ & $ 2^6 $ & & $ 2^9 $ & & $ 2^{12} $ & & $ 2^{15} $ & & $ 2^{18} $ & \\[0.3em]

&&&&&&&&\\[-0.8em]
\phantom{MDS,} $ h = 4 $ & $ 2^8 $ & & $ 2^{12} $ & & $ 2^{16} $ & & $ 2^{20} $ & & $ 2^{24} $ & \\[0.3em]
\hline
&&&&&&&&\\[-0.8em]
Hamming, $ \rho = 3 $, $ h = 2 $ & $ 2^6 $ & $ 21 $ & $ 2^9 $ & $ 73 $ & $ 2^{12} $ & $ 273 $ & $ 2^{15} $ & $ 1057 $ & $ 2^{18} $ & $ 4161 $ \\[0.3em]
\hline 
&&&&&&&&\\[-0.8em]
Pr. BCH, $ s = 2 $, $ h = 2 $ & $ 2^6 $ & \multirow{3}{*}{$ 15 $} & $ 2^9 $ & \multirow{3}{*}{$ 63 $} & $ 2^{12} $ & \multirow{3}{*}{$ 255 $} & $ 2^{15} $ & \multirow{3}{*}{$ 1023 $} & $ 2^{18} $ & \multirow{3}{*}{$ 4095 $} \\[0.3em]

&&&&&&&&\\[-0.8em]
\phantom{Pr. BCH, $ s = 2 $,} $ h = 3 $ & $ 2^{10} $ & & $ 2^{15} $ & & $ 2^{20} $ & & $ 2^{25} $ & & $ 2^{30} $ & \\[0.3em]

&&&&&&&&\\[-0.8em]
\phantom{Pr. BCH, $ s = 2 $,} $ h = 4 $ & $ 2^{14} $ & & $ 2^{21} $ & & $ 2^{28} $ & & $ 2^{35} $ & & $ 2^{42} $ & \\[0.3em]
\hline 
&&&&&&&&\\[-0.8em]
Pr. BCH, $ s = 3 $, $ h = 2 $ & $ 2^8 $ & \multirow{3}{*}{$ 63 $} & $ 2^{12} $ & \multirow{3}{*}{$ 511 $} & $ 2^{16} $ & \multirow{3}{*}{$ 4095 $} & $ 2^{20} $ & \multirow{3}{*}{$ 2^{15}-1 $} & $ 2^{24} $ & \multirow{3}{*}{$ 2^{18}-1 $} \\[0.3em]

&&&&&&&&\\[-0.8em]
\phantom{Pr. BCH, $ s = 3 $,} $ h = 3 $ & $ 2^{14} $ & & $ 2^{21} $ & & $ 2^{28} $ & & $ 2^{35} $ & & $ 2^{42} $ & \\[0.3em]

&&&&&&&&\\[-0.8em]
\phantom{Pr. BCH, $ s = 3 $,} $ h = 4 $ & $ 2^{20} $ & & $ 2^{30} $ & & $ 2^{40} $ & & $ 2^{50} $ & & $ 2^{60} $ & \\[0.3em]
\hline 
\end{tabular}
\caption{Table for fixed $ q = 2 $, while other parameters vary. }
\label{table fields MSRD q = 2}
\end{table}

\begin{table} [t]
\centering
\begin{tabular}{c||cc|cc|cc|cc|cc}
\hline
&&&&&&&&\\[-0.8em]
 Table for $ g = 7 $, $ q $ even & \multicolumn{2}{|c|}{$ r = 2 $} & \multicolumn{2}{|c|}{$ r = 3 $} & \multicolumn{2}{|c|}{$ r = 4 $} & \multicolumn{2}{|c|}{$ r = 5 $} & \multicolumn{2}{c}{$ r = 6 $} \\[0.3em]
&&&&&&&&\\[-0.8em]
Code $ \mathcal{C}_{\boldsymbol\gamma} $ & $ q^m $ & $ q $ & $ q^m $ & $ q $ & $ q^m $ & $ q $ & $ q^m $ & $ q $ & $ q^m $ & $ q $ \\[0.3em]
\hline\hline
&&&&&&&&\\[-0.8em]
Trivial $ \mathcal{C}_{\boldsymbol\gamma} = \{ 0 \} $ (Lin. RS), $ \mathbf{\forall} h \geq 1 $ & $ \mathbf{2^6} $ & $ 2^3 $ & $ \mathbf{2^{9}} $ & $ 2^3 $ & $ \mathbf{2^{12}} $ & $ 2^3 $ & $ \mathbf{2^{15}} $ & $ 2^3 $ & $ \mathbf{2^{18}} $ & $ 2^3 $ \\[0.3em]
\hline 
&&&&&&&&\\[-0.8em]
MDS, $ h = 2 $ & $ 2^8 $ & \multirow{3}{*}{$ 2^2 $} & $ \mathbf{2^{6}} $ & \multirow{3}{*}{$ 2 $} & $ \mathbf{2^8} $ & \multirow{3}{*}{$ 2 $} & $ \mathbf{2^{10}} $ &  \multirow{3}{*}{$ 2 $} & $ \mathbf{2^{12}} $ &  \multirow{3}{*}{$ 2 $} \\[0.3em]

&&&&&&&&\\[-0.8em]
\phantom{MDS,} $ h = 3 $ & $ 2^{12} $ & & $ \mathbf{2^{9}} $ & & $ \mathbf{2^{12}} $ & & $ \mathbf{2^{15}} $ & & $ \mathbf{2^{18}} $ & \\[0.3em]

&&&&&&&&\\[-0.8em]
\phantom{MDS,} $ h = 4 $ & $ 2^{16} $ & & $ 2^{12} $ & & $ 2^{16} $ & & $ 2^{20} $ & & $ 2^{24} $ & \\[0.3em]
\hline
&&&&&&&&\\[-0.8em]
Hamming, $ \rho = 3 $, $ h = 2 $ & $ \mathbf{2^6} $ & $ 2 $ & $ 2^9 $ & $ 2 $ & $ 2^{12} $ & $ 2 $ & $ 2^{15} $ & $ 2 $ & $ 2^{18} $ & $ 2 $ \\[0.3em]
\hline 
&&&&&&&&\\[-0.8em]
Pr. BCH, $ s = 1,2 $, $ h = 2 $ & $ \mathbf{2^6} $ & \multirow{3}{*}{$ 2 $} & $ \mathbf{2^6} $ & \multirow{3}{*}{$ 2 $} & $ \mathbf{2^8} $ & \multirow{3}{*}{$ 2 $} & $ \mathbf{2^{10}} $ & \multirow{3}{*}{$ 2 $} & $ \mathbf{2^{12}} $ & \multirow{3}{*}{$ 2 $} \\[0.3em]

&&&&&&&&\\[-0.8em]
\phantom{Pr. BCH, $ s = 1,2 $,} $ h = 3 $ & $ 2^{10} $ & & $ \mathbf{2^{9}} $ & & $ \mathbf{2^{12}} $ & & $ \mathbf{2^{15}} $ & & $ \mathbf{2^{18}} $ & \\[0.3em]

&&&&&&&&\\[-0.8em]
\phantom{Pr. BCH, $ s = 1,2 $,} $ h = 4 $ & $ 2^{14} $ & & $ 2^{12} $ & & $ 2^{16} $ & & $ 2^{20} $ & & $ 2^{24} $ & \\[0.3em]
\hline\hline
&&&&&&&&\\[-0.8em]
Best MRD code possible, $ \mathbf{\forall} h \geq 1 $ & $ 2^{14} $ & $ 2 $ & $ 2^{28} $ & $ 2 $ & $ 2^{42} $ & $ 2 $ & $ 2^{56} $ & $ 2 $ & $ 2^{70} $ & $ 2 $ \\[0.3em]
\hline 
\end{tabular}
\caption{Table for fixed $ g = 7 $, while other parameters vary. }
\label{table fields MSRD g >= 7}
\end{table}

\begin{table} [t]
\centering
\begin{tabular}{c||cc|cc|cc|cc|cc}
\hline
&&&&&&&&\\[-0.8em]
 Table for $ g = 15 $, $ q $ even & \multicolumn{2}{|c|}{$ r = 2 $} & \multicolumn{2}{|c|}{$ r = 3 $} & \multicolumn{2}{|c|}{$ r = 4 $} & \multicolumn{2}{|c|}{$ r = 5 $} & \multicolumn{2}{c}{$ r = 6 $} \\[0.3em]
&&&&&&&&\\[-0.8em]
Code $ \mathcal{C}_{\boldsymbol\gamma} $ & $ q^m $ & $ q $ & $ q^m $ & $ q $ & $ q^m $ & $ q $ & $ q^m $ & $ q $ & $ q^m $ & $ q $ \\[0.3em]
\hline\hline
&&&&&&&&\\[-0.8em]
Trivial $ \mathcal{C}_{\boldsymbol\gamma} = \{ 0 \} $ (Lin. RS), $ \mathbf{\forall} h \geq 1 $ & $ \mathbf{2^8} $ & $ 2^4 $ & $ \mathbf{2^{12}} $ & $ 2^4 $ & $ \mathbf{2^{16}} $ & $ 2^4 $ & $ \mathbf{2^{20}} $ & $ 2^4 $ & $ \mathbf{2^{24}} $ & $ 2^4 $ \\[0.3em]
\hline 
&&&&&&&&\\[-0.8em]
MDS, $ h = 2 $ & $ 2^8 $ & \multirow{3}{*}{$ 2^2 $} & $ 2^{12} $ & \multirow{3}{*}{$ 2^2 $} & $ \mathbf{2^8} $ & \multirow{3}{*}{$ 2 $} & $ \mathbf{2^{10}} $ &  \multirow{3}{*}{$ 2 $} & $ \mathbf{2^{12}} $ &  \multirow{3}{*}{$ 2 $} \\[0.3em]

&&&&&&&&\\[-0.8em]
\phantom{MDS,} $ h = 3 $ & $ 2^{12} $ & & $ 2^{18} $ & & $ \mathbf{2^{12}} $ & & $ \mathbf{2^{15}} $ & & $ \mathbf{2^{18}} $ & \\[0.3em]

&&&&&&&&\\[-0.8em]
\phantom{MDS,} $ h = 4 $ & $ 2^{16} $ & & $ 2^{24} $ & & $ \mathbf{2^{16}} $ & & $ \mathbf{2^{20}} $ & & $ \mathbf{2^{24}} $ & \\[0.3em]
\hline
&&&&&&&&\\[-0.8em]
Hamming, $ \rho = 3 $, $ h = 2 $ & $ \mathbf{2^6} $ & $ 2 $ & $ \mathbf{2^9} $ & $ 2 $ & $ 2^{12} $ & $ 2 $ & $ 2^{15} $ & $ 2 $ & $ 2^{18} $ & $ 2 $ \\[0.3em]
\hline 
&&&&&&&&\\[-0.8em]
Pr. BCH, $ s = 1,2 $, $ h = 2 $ & $ \mathbf{2^6} $ & \multirow{3}{*}{$ 2 $} & $ \mathbf{2^9} $ & \multirow{3}{*}{$ 2 $} & $ \mathbf{2^8} $ & \multirow{3}{*}{$ 2 $} & $ \mathbf{2^{10}} $ & \multirow{3}{*}{$ 2 $} & $ \mathbf{2^{12}} $ & \multirow{3}{*}{$ 2 $} \\[0.3em]

&&&&&&&&\\[-0.8em]
\phantom{Pr. BCH, $ s = 1,2 $,} $ h = 3 $ & $ 2^{10} $ & & $ 2^{15} $ & & $ \mathbf{2^{12}} $ & & $ \mathbf{2^{15}} $ & & $ \mathbf{2^{18}} $ & \\[0.3em]

&&&&&&&&\\[-0.8em]
\phantom{Pr. BCH, $ s = 1,2 $,} $ h = 4 $ & $ 2^{14} $ & & $ 2^{21} $ & & $ \mathbf{2^{16}} $ & & $ \mathbf{2^{20}} $ & & $ \mathbf{2^{24}} $ & \\[0.3em]
\hline\hline
&&&&&&&&\\[-0.8em]
Best MRD code possible, $ \mathbf{\forall} h \geq 1 $ & $ 2^{30} $ & $ 2 $ & $ 2^{45} $ & $ 2 $ & $ 2^{60} $ & $ 2 $ & $ 2^{75} $ & $ 2 $ & $ 2^{90} $ & $ 2 $ \\[0.3em]
\hline 
\end{tabular}
\caption{Table for fixed $ g = 15 $, while other parameters vary. }
\label{table fields MSRD g >= 15}
\end{table}

\begin{table} [t]
\centering
\begin{tabular}{c||cc|cc|cc|cc|cc}
\hline
&&&&&&&&\\[-0.8em]
 Table for $ g = 31 $, $ q $ even & \multicolumn{2}{|c|}{$ r = 2 $} & \multicolumn{2}{|c|}{$ r = 3 $} & \multicolumn{2}{|c|}{$ r = 4 $} & \multicolumn{2}{|c|}{$ r = 5 $} & \multicolumn{2}{c}{$ r = 6 $} \\[0.3em]
&&&&&&&&\\[-0.8em]
Code $ \mathcal{C}_{\boldsymbol\gamma} $ & $ q^m $ & $ q $ & $ q^m $ & $ q $ & $ q^m $ & $ q $ & $ q^m $ & $ q $ & $ q^m $ & $ q $ \\[0.3em]
\hline\hline
&&&&&&&&\\[-0.8em]
Trivial $ \mathcal{C}_{\boldsymbol\gamma} = \{ 0 \} $ (Lin. RS), $ \mathbf{\forall} h \geq 1 $ & $ \mathbf{2^{10}} $ & $ 2^5 $ & $ \mathbf{2^{15}} $ & $ 2^5 $ & $ \mathbf{2^{20}} $ & $ 2^5 $ & $ \mathbf{2^{25}} $ & $ 2^5 $ & $ \mathbf{2^{30}} $ & $ 2^5 $ \\[0.3em]
\hline 
&&&&&&&&\\[-0.8em]
MDS, $ h = 2 $ & $ \mathbf{2^8} $ & \multirow{3}{*}{$ 2^2 $} & $ 2^{12} $ & \multirow{3}{*}{$ 2^2 $} & $ 2^{16} $ & \multirow{3}{*}{$ 2^2 $} & $ \mathbf{2^{10}} $ &  \multirow{3}{*}{$ 2 $} & $ \mathbf{2^{12}} $ &  \multirow{3}{*}{$ 2 $} \\[0.3em]

&&&&&&&&\\[-0.8em]
\phantom{MDS,} $ h = 3 $ & $ 2^{12} $ & & $ 2^{18} $ & & $ 2^{24} $ & & $ \mathbf{2^{15}} $ & & $ \mathbf{2^{18}} $ & \\[0.3em]

&&&&&&&&\\[-0.8em]
\phantom{MDS,} $ h = 4 $ & $ 2^{16} $ & & $ 2^{24} $ & & $ 2^{32} $ & & $ \mathbf{2^{20}} $ & & $ \mathbf{2^{24}} $ & \\[0.3em]
\hline
&&&&&&&&\\[-0.8em]
Hamming, $ \rho = 3 $, $ h = 2 $ & $ \mathbf{2^8} $ & $ 2 $ & $ \mathbf{2^9} $ & $ 2 $ & $ \mathbf{2^{12}} $ & $ 2 $ & $ 2^{15} $ & $ 2 $ & $ 2^{18} $ & $ 2 $ \\[0.3em]
\hline 
&&&&&&&&\\[-0.8em]
Pr. BCH, $ s = 1,2,3 $, $ h = 2 $ & $ \mathbf{2^8} $ & \multirow{3}{*}{$ 2 $} & $ \mathbf{2^9} $ & \multirow{3}{*}{$ 2 $} & $ \mathbf{2^{12}} $ & \multirow{3}{*}{$ 2 $} & $ \mathbf{2^{10}} $ & \multirow{3}{*}{$ 2 $} & $ \mathbf{2^{12}} $ & \multirow{3}{*}{$ 2 $} \\[0.3em]

&&&&&&&&\\[-0.8em]
\phantom{Pr. BCH, $ s = 1,2,3 $,} $ h = 3 $ & $ 2^{14} $ & & $ \mathbf{2^{15}} $ & & $ \mathbf{2^{20}} $ & & $ \mathbf{2^{15}} $ & & $ \mathbf{2^{18}} $ & \\[0.3em]

&&&&&&&&\\[-0.8em]
\phantom{Pr. BCH, $ s = 1,2,3 $,} $ h = 4 $ & $ 2^{20} $ & & $ 2^{21} $ & & $ 2^{28} $ & & $ \mathbf{2^{20}} $ & & $ \mathbf{2^{24}} $ & \\[0.3em]
\hline\hline
&&&&&&&&\\[-0.8em]
Best MRD code possible, $ \mathbf{\forall} h \geq 1 $ & $ 2^{62} $ & $ 2 $ & $ 2^{93} $ & $ 2 $ & $ 2^{124} $ & $ 2 $ & $ 2^{155} $ & $ 2 $ & $ 2^{186} $ & $ 2 $ \\[0.3em]
\hline 
\end{tabular}
\caption{Table for fixed $ g = 31 $, while other parameters vary. }
\label{table fields MSRD g >= 31}
\end{table}

\begin{table} [t]
\centering
\begin{tabular}{c||cc|cc|cc|cc|cc}
\hline
&&&&&&&&\\[-0.8em]
 Table for $ N = gr = 30 $, $ q $ even & \multicolumn{2}{|c|}{$ r = 2 $} & \multicolumn{2}{|c|}{$ r = 3 $} & \multicolumn{2}{|c|}{$ r = 4 $} & \multicolumn{2}{|c|}{$ r = 5 $} & \multicolumn{2}{c}{$ r = 6 $} \\[0.3em]
&&&&&&&&\\[-0.8em]
Code $ \mathcal{C}_{\boldsymbol\gamma} $ & $ q^m $ & $ q $ & $ q^m $ & $ q $ & $ q^m $ & $ q $ & $ q^m $ & $ q $ & $ q^m $ & $ q $ \\[0.3em]
\hline\hline
&&&&&&&&\\[-0.8em]
Trivial $ \mathcal{C}_{\boldsymbol\gamma} = \{ 0 \} $ (Lin. RS), $ \mathbf{\forall} h \geq 1 $ & $ \mathbf{2^8} $ & $ 2^4 $ & $ \mathbf{2^{12}} $ & $ 2^4 $ & $ \mathbf{2^{16}} $ & $ 2^4 $ & $ \mathbf{2^{15}} $ & $ 2^3 $ & $ \mathbf{2^{18}} $ & $ 2^3 $ \\[0.3em]
\hline 
&&&&&&&&\\[-0.8em]
MDS, $ h = 2 $ & $ 2^8 $ & \multirow{3}{*}{$ 2^2 $} & $ 2^{12} $ & \multirow{3}{*}{$ 2^2 $} & $ \mathbf{2^8} $ & \multirow{3}{*}{$ 2 $} & $ \mathbf{2^{10}} $ &  \multirow{3}{*}{$ 2 $} & $ \mathbf{2^{12}} $ &  \multirow{3}{*}{$ 2 $} \\[0.3em]

&&&&&&&&\\[-0.8em]
\phantom{MDS,} $ h = 3 $ & $ 2^{12} $ & & $ 2^{18} $ & & $ \mathbf{2^{12}} $ & & $ \mathbf{2^{15}} $ & & $ \mathbf{2^{18}} $ & \\[0.3em]

&&&&&&&&\\[-0.8em]
\phantom{MDS,} $ h = 4 $ & $ 2^{16} $ & & $ 2^{24} $ & & $ \mathbf{2^{16}} $ & & $ 2^{20} $ & & $ 2^{24} $ & \\[0.3em]
\hline
&&&&&&&&\\[-0.8em]
Hamming, $ \rho = 3 $, $ h = 2 $ & $ \mathbf{2^6} $ & $ 2 $ & $ \mathbf{2^9} $ & $ 2 $ & $ 2^{12} $ & $ 2 $ & $ 2^{15} $ & $ 2 $ & $ 2^{18} $ & $ 2 $ \\[0.3em]
\hline 
&&&&&&&&\\[-0.8em]
Pr. BCH, $ s = 1,2 $, $ h = 2 $ & $ \mathbf{2^6} $ & \multirow{3}{*}{$ 2 $} & $ \mathbf{2^9} $ & \multirow{3}{*}{$ 2 $} & $ \mathbf{2^8} $ & \multirow{3}{*}{$ 2 $} & $ \mathbf{2^{10}} $ & \multirow{3}{*}{$ 2 $} & $ \mathbf{2^{12}} $ & \multirow{3}{*}{$ 2 $} \\[0.3em]

&&&&&&&&\\[-0.8em]
\phantom{Pr. BCH, $ s = 1,2 $,} $ h = 3 $ & $ 2^{10} $ & & $ 2^{15} $ & & $ \mathbf{2^{12}} $ & & $ \mathbf{2^{15}} $ & & $ \mathbf{2^{18}} $ & \\[0.3em]

&&&&&&&&\\[-0.8em]
\phantom{Pr. BCH, $ s = 1,2 $,} $ h = 4 $ & $ 2^{14} $ & & $ 2^{21} $ & & $ \mathbf{2^{16}} $ & & $ 2^{20} $ & & $ 2^{24} $ & \\[0.3em]
\hline\hline
&&&&&&&&\\[-0.8em]
Best MRD code possible, $ \mathbf{\forall} h \geq 1 $ & $ 2^{30} $ & $ 2 $ & $ 2^{30} $ & $ 2 $ & $ 2^{30} $ & $ 2 $ & $ 2^{30} $ & $ 2 $ & $ 2^{30} $ & $ 2 $ \\[0.3em]
\hline 
\end{tabular}
\caption{Table for fixed $ N = gr = 30 $, while other parameters vary. }
\label{table fields MSRD N >= 30}
\end{table}

\begin{table} [t]
\centering
\begin{tabular}{c||cc|cc|cc|cc|cc}
\hline
&&&&&&&&\\[-0.8em]
  Table for $ N = gr = 62 $, $ q $ even & \multicolumn{2}{|c|}{$ r = 2 $} & \multicolumn{2}{|c|}{$ r = 3 $} & \multicolumn{2}{|c|}{$ r = 4 $} & \multicolumn{2}{|c|}{$ r = 5 $} & \multicolumn{2}{c}{$ r = 6 $} \\[0.3em]
&&&&&&&&\\[-0.8em]
Code $ \mathcal{C}_{\boldsymbol\gamma} $ & $ q^m $ & $ q $ & $ q^m $ & $ q $ & $ q^m $ & $ q $ & $ q^m $ & $ q $ & $ q^m $ & $ q $ \\[0.3em]
\hline\hline
&&&&&&&&\\[-0.8em]
Trivial $ \mathcal{C}_{\boldsymbol\gamma} = \{ 0 \} $ (Lin. RS), $ \mathbf{\forall} h \geq 1 $ & $ \mathbf{2^{10}} $ & $ 2^5 $ & $ \mathbf{2^{15}} $ & $ 2^5 $ & $ \mathbf{2^{20}} $ & $ 2^5 $ & $ \mathbf{2^{20}} $ & $ 2^4 $ & $ \mathbf{2^{24}} $ & $ 2^4 $ \\[0.3em]
\hline 
&&&&&&&&\\[-0.8em]
MDS, $ h = 2 $ & $ \mathbf{2^8} $ & \multirow{3}{*}{$ 2^2 $} & $ 2^{12} $ & \multirow{3}{*}{$ 2^2 $} & $ 2^{16} $ & \multirow{3}{*}{$ 2^2 $} & $ \mathbf{2^{10}} $ &  \multirow{3}{*}{$ 2 $} & $ \mathbf{2^{12}} $ &  \multirow{3}{*}{$ 2 $} \\[0.3em]

&&&&&&&&\\[-0.8em]
\phantom{MDS,} $ h = 3 $ & $ 2^{12} $ & & $ 2^{18} $ & & $ 2^{24} $ & & $ \mathbf{2^{15}} $ & & $ \mathbf{2^{18}} $ & \\[0.3em]

&&&&&&&&\\[-0.8em]
\phantom{MDS,} $ h = 4 $ & $ 2^{16} $ & & $ 2^{24} $ & & $ 2^{32} $ & & $ \mathbf{2^{20}} $ & & $ \mathbf{2^{24}} $ & \\[0.3em]
\hline
&&&&&&&&\\[-0.8em]
Hamming, $ \rho = 3 $, $ h = 2 $ & $ \mathbf{2^8} $ & $ 2 $ & $ \mathbf{2^9} $ & $ 2 $ & $ \mathbf{2^{12}} $ & $ 2 $ & $ 2^{15} $ & $ 2 $ & $ 2^{18} $ & $ 2 $ \\[0.3em]
\hline 
&&&&&&&&\\[-0.8em]
Pr. BCH, $ s = 1,2,3 $, $ h = 2 $ & $ \mathbf{2^8} $ & \multirow{3}{*}{$ 2 $} & $ \mathbf{2^9} $ & \multirow{3}{*}{$ 2 $} & $ \mathbf{2^{12}} $ & \multirow{3}{*}{$ 2 $} & $ \mathbf{2^{10}} $ & \multirow{3}{*}{$ 2 $} & $ \mathbf{2^{12}} $ & \multirow{3}{*}{$ 2 $} \\[0.3em]

&&&&&&&&\\[-0.8em]
\phantom{Pr. BCH, $ s = 1,2,3 $,} $ h = 3 $ & $ 2^{14} $ & & $ \mathbf{2^{15}} $ & & $ \mathbf{2^{20}} $ & & $ \mathbf{2^{15}} $ & & $ \mathbf{2^{18}} $ & \\[0.3em]

&&&&&&&&\\[-0.8em]
\phantom{Pr. BCH, $ s = 1,2,3 $,} $ h = 4 $ & $ 2^{20} $ & & $ 2^{21} $ & & $ 2^{28} $ & & $ \mathbf{2^{20}} $ & & $ \mathbf{2^{24}} $ & \\[0.3em]
\hline\hline
&&&&&&&&\\[-0.8em]
Best MRD code possible, $ \mathbf{\forall} h \geq 1 $ & $ 2^{62} $ & $ 2 $ & $ 2^{62} $ & $ 2 $ & $ 2^{62} $ & $ 2 $ & $ 2^{62} $ & $ 2 $ & $ 2^{62} $ & $ 2 $ \\[0.3em]
\hline 
\end{tabular}
\caption{Table for fixed $ N = gr = 62 $, while other parameters vary. }
\label{table fields MSRD N >= 62}
\end{table}

\end{document}